\newtheorem{theorem}{Theorem}
\numberwithin{equation}{section}
\newtheorem{lemma}[theorem]{Lemma}
 \newtheorem{proposition}[theorem]{Proposition}
\newtheorem{corollary}[theorem]{Corollary}
\newtheorem{remark}[theorem]{Remark}
\newtheorem{example}[theorem]{Example}
\newtheorem{definition}[theorem]{Definition}
\newtheorem{assumption}[theorem]{Assumption}
\def\ker{\textrm{Ker}}
\def\RR{\mathbb R}
\def\be{\begin{equation}}
\def\ee{\end{equation}}
\def\bea{\begin{eqnarray}}
\def\eea{\end{eqnarray}}
\begin{document}

\begin{center} {\Large\bf Microscopic modeling and analysis of collective decision making: equality bias leads suboptimal solutions}
\bigskip

{Pierluigi Vellucci$^1$ \and Mattia Zanella$^2$}

\end{center}

\bigskip
{\it Abstract:}
We discuss a novel microscopic model for collective decision-making interacting multi-agent systems. In particular we are interested in modeling a well known phenomena in the experimental literature called equality bias, where agents tend to behave in the same way as if they were as good, or as bad, as their partner. We analyze the introduced problem and we prove the suboptimality of the collective decision-making in the presence of equality bias. Numerical experiments are addressed in the last section.

\section{Introduction}


Several{\let\thefootnote\relax\footnotetext{$^1$Department of Economics, Roma Tre University, via Silvio D'Amico 77, 00145 Rome, Italy; pierluigi.vellucci@uniroma3.it}} experimental{\let\thefootnote\relax\footnotetext{$^2$Politecnico di Torino, Department of Mathematical Sciences, Corso Duca degli Abruzzi 24, 10129 Torino, Italy; mattia.zanella@polito.it}} works on group psychology has been done in recent years in order to observe unexpected dysfunctional behaviors in decision-making communities, see \cite{BOLRRF,G,KD,MPNAS} and the references therein. Usual example are the \emph{groupthink}, a collective phenomena whereby people try to minimize internal conflicts for reaching consensus to the detriment of the common good, the \emph{Dunning-Kruger effect}, regarding an overestimation of personal competence of unskilled people, and the \emph{equality bias}, whereby people behave as if they are as good, or as bad as their partner. In the following we will focus on this latter aspect of decision--making systems.

A valuable improvement on the direction of understanding the emergence of the equality bias has been done in \cite{MPNAS}. Here, authors asked how people deal with individual differences in competence in the context of a collective perceptual decision-making task, developing a metric for estimating how participants weight their partner's opinion relative to their own. Empirical experiments, replicated across three slightly different countries like Denmark, Iran, and China, show how participants assigned nearly equal weights to each other's opinions regardless of the real differences in their competence. The results show that the equality bias is particularly costly for a group when a competence gap separates its members.

Drawing inspiration by these recent experimental results, and by the mathematical set-up introduced in the recent works \cite{AHP,APZa,APZc,BT,CS,PT1,PT2,T}, we consider here a  microscopic model taking into account the influence of the competence in collective decision-making tasks for systems of interacting agents. This works follows the recent study of the authors \cite{PVZ} where the decision-making task is discussed at the kinetic level.

The approach proposed in this paper is based on the Laplacian matrix of the connectivity graph and is inspired by classical works on self--organization \cite{CS,Shen}. With reference to the experimental literature we introduce competence--based interaction functions describing the maximum competence (MC) and the equality bias (EB) case. In particular, the MC model sketches the case in which the emerging decision coincides with the one of the most competent agent. On the other hand the EB model should deal with the complementary case. Based on a simplified communication coefficients, we derive the asymptotic convergence of the overall system for the decision models. A key feature of present modeling is the evolution of the competence variable, whose dynamics takes into account the social background of the single agent and the possibility to improve specific competences during interactions with more competent agents, see \cite{BT,PT1}.
At the continuous level it has been showed in \cite{PVZ} how the variation of the mean opinion of agents with given competence follows the choice of the most competent agents in the MC case. The present approach is based on the explicit derivation of eigenvalues of the system. 

The present manuscript is organized as follows. In Section \ref{sec:micro} we briefly review some microscopic models for alignment dynamics, we introduce here a specific model for decision and competence. Then we discuss two main models for the collective decision-making, the mentioned MC model and the EB model. In Section \ref{sec:simpl} we analyze the main properties of the model and we show how the equality bias leads the system of agents toward suboptimal collective decisions, computing the eigenvalues of the aforementioned Laplacian matrix and proving that the collective decision-making in the presence of equality bias is suboptimal for each $t>0$. Finally, in Section \ref{sec:num} we address numerical experiments based on the introduced model.

\section{Description of the model}\label{sec:micro}

In this section we discuss some modeling aspects of second order microscopic model for decision-making dynamics. Our mathematical approach follows the set-up of several recent works on opinion dynamics, see \cite{APZa,BT,CS,T} and the references therein. These class of models gained deepest attention in scientific research in the last decade thanks to their countless applications in biology, socio-economic sciences and control theory \cite{AHP,APTZ,APZb,APZc,CC,CFL,CFTV,CPT,DMPW,HZ,PT2}.

\subsection{Microscopic models for the collective behavior}
Without intending to review whole literature, we introduce some well-known microscopic models  describing particular aspects of the aggregate motion of a finite system of interacting agents. We focus in particular on alignment-type dynamics.

We are interested in studying the dynamics of $N\in\mathbb{N}$ individuals with the following general structure at time $t\in\RR^+$
\begin{equation}\begin{cases}\label{eq:initial_val}
\dot{x}_i = f(x_i,w_i), \qquad i=1,\dots,N,\\
\dot{w}_i = S(x_i)+\displaystyle\dfrac{1}{\alpha_i}\sum_{j=1}^NP(x_i,x_j;w_i,w_j) (w_j-w_i),
\end{cases}\end{equation}
where $(x_i,w_i)\in\RR^{2d}$ for each $t\ge 0$, $S(\cdot)$ is a self-propelling term and $P(\cdot,\cdot;\cdot,\cdot)$ is a general interaction function depending on both the considered variables. In \eqref{eq:initial_val} we introduced a function $f:\RR^{2d}\rightarrow \RR^{d}$, it assumes the form $f(x_i,w_i)=w_i$ in case of flocking systems, in this case $x_i,w_i$ are the space and velocity variables the $i$th agent. It may describe a wider class of processes which will be specified later on. \\

We exemplify the structure of flocking systems by presenting the Cucker-Smale (CS) model and the Motsch-Tadmor (MT) model. In the classic CS model each agent adjusts its velocity by adding a weighted average of the differences of its velocity with those of all the other agents. Therefore, for all $i\in\{1,\dots,N\}$ we consider a symmetric interaction function of the form
\begin{equation}\label{eq:P_dist}
P(x_i,x_j;w_i,w_j)=p(\|x_i-x_j \|^2)
\end{equation}
depending on the Euclidean distance between agents and the constant scaling $\alpha_i=N$, see \cite{CS}. In particular the typical choice is the following
\begin{equation*}
p(\|x_i-x_j \|^2) = \dfrac{K}{(\zeta^2+\|x_i-x_j \|^2)^{\gamma}},
\end{equation*}
with $K,\zeta>0$ and $\gamma\ge 0$. Without considering self-propelling terms, i.e. $S(\cdot)\equiv 0$, it has been shown how under these assumptions that the resulting initial value problems is well-posed: mass and momentum are preserved and the solution has compact support for both position and velocity \cite{CCR,CFRT,CFTV}. Further, in the CS model unconditional alignment emerges for $\gamma\le1/2$ and the velocity support collapses exponentially to a single point and the system holds the same disposition. \\

An example of non-symmetric interactions in flocking systems is given by the MT model \cite{MT1,MT2}. Here the alignment is based on the relative influence between the system of agents, therefore we consider an interaction of the form introduced in \eqref{eq:P_dist}
whereas the scaling factors $\alpha_i>0$ are given by
\begin{equation*}
\alpha_i = \sum_{j\ne i}P(\|x_i-x_j \|^2).
\end{equation*}
With this definition the dynamics looses any property of symmetry of the CS model, linking the initial value problem \eqref{eq:initial_val} to more sophisticated models where the $i$th agent may interact with the $j$th agent but not vice versa, for example leader-follower models as well as limited perception models \cite{DOCBC}.

\subsection{A competence-based model for collective decision-making}

We are interested to describe the coupled evolution of decisions and competence in a system of $N\in\mathbb{N}$ interacting agents. Each agent is endowed with two quantities $(x_i,w_i)$ representing its competence and decision respectively, where $x_i\in X\subseteq\RR^+$ and $w_i\in[-1,1]=\mathcal{I}$, where $\pm 1$ denote two opposite  possible decisions of an agent.

One of the main factors influencing the evolution of the competence variable is the social background in which individuals lives. It is therefore natural to assume that competence is partially inherited from the environment with the possibility to learn specific competences by interacting with more competent agents \cite{BT,PT1,PT2,PVZ}.

Real experiments have been done in the psychology literature in order to define the impact of the competence on a group decision-making, see \cite{BOLRRF,KD,MPNAS,HF} and the references therein.
Competence is generally associated to the predisposition to listen and give value to the other  opinions. The higher this quality, greater is the ability to value other opinions. Vice versa, a person unwilling to listen and dialogue is usually marked by not competent. An emergent phenomenon in group decision-making is called \emph{equality bias}, that is a misjudgement of personal competence of unskilled people during the exchange of informations, which goes hand in hand with the tendency of the most skilled individuals to underestimate their competence.

From the general structure introduced in the previous section we consider the evolution in $[0,T_f]$, $T_f>0$ of the following system of differential equations
\begin{equation}\begin{cases}\label{eq:micro}
\dot x_i = \displaystyle \sum_{j=1}^N \lambda(x_i,x_j) (x_j-x_i)+\lambda_B(x_i) z, \qquad i=1,\dots,N\\
\dot w_i= \dfrac{1}{N} \displaystyle\sum_{j=1}^N P(x_i,x_j;w_i,w_j)\left(w_j-w_i\right),
\end{cases}\end{equation}
where $z\in\RR^+$ is a the degree of competence achieved from the background at each interaction, having distribution $C(z)$ and bounded mean $m_B$
\begin{equation*}
\int_{\RR^+}C(z)dz=1,\qquad \int_{\RR^+}zC(z)dz=m_B.
\end{equation*}
Further, $\lambda_B(\cdot)$ quantifies the expertise gained from the background and $\lambda(\cdot,\cdot)$ weights the exchange of competence between individuals. A possible choice for the function $\lambda(\cdot,\cdot)$ is $\lambda(x_i,x_j)=\textrm{const.}>0$ if $x_i<x_j$ and $\lambda(x_i,x_j)=0$ elsewhere. In the above system we introduced the interaction function $0 \le P(w_i,w_j;x_i,x_j)\le 1$ depending on both the decisions and competence of the interacting agents.

More realistic models may be obtained by adding to \eqref{eq:micro} decision dependent noise terms modeling self-thinking processes and characterized by a function $D(x_i,w_i)\in[0,1]$ generally called local relevance of the diffusion for a given decision and competence.

 A possible choice for the interaction function is the following
\begin{equation}\label{eq:P}
P(w_i,w_j;x_i,x_j)=Q(w_i,w_j) R(x_i,x_j),
\end{equation}
where $0\le Q(\cdot,\cdot)\le 1$ is the compromise propensity and $0\le R(\cdot,\cdot)\le 1$ which takes into account the agents' competence. Let us assume $Q(w_i,w_j)\equiv 1$, we adopt the following notation for the square matrix $\mathcal R_N\in \textrm{Mat}_N(\RR^+)$
\begin{equation}\label{eq:R}
r_{ij}=R(x_i,x_j),\qquad \textrm{for all $i,j=1,\dots,N$.}
\end{equation}
 We further define the diagonal square matrix $\mathcal D_N\in \textrm{Mat}_N(\RR^+)$
 \begin{equation}\label{eq:D}
( \mathcal D_N)_{ij}=
 \begin{cases}
\sum_{j=1}^N r_{ij} & \textrm{if $i=j$}\\
0 & \textrm{if $i\ne j$,}
\end{cases} \end{equation}
for all $i,j=1,\dots,N$. Then we can rewrite \eqref{eq:micro} as follows
 \begin{equation*}\begin{split}
\dot w_i & = -\left(\dfrac{1}{N} \sum_{j=1}^N r_{ij}\right)w_i(t)+\dfrac{1}{N} \sum_{j=1}^N r_{ij}w_j(t),\\
&=-\frac{1}{N}\left[\mathcal D_N w(t)\right]_i+\frac{1}{N}\left[\mathcal R_N w(t)\right]_i,\\
&=-\dfrac{1}{N}\left[\mathcal L_N w(t)\right]_i.
\end{split}\end{equation*}
being
\be\label{eq:laplacian}
\mathcal L_N=\mathcal D_N-\mathcal R_N;
\ee
 where $\mathcal L_N$ is usually called Laplacian matrix of a graph.

\subsection{Collective decision-making under equality bias}
In the following we consider two main models of decision-making inspired by real experiments \cite{BOLRRF,MPNAS}. The first model takes into account the competence of individuals: at each interaction the prevailing decision coincides with the one of the system with maximum competence. We will refer to this model as maximum competence model (MC). In the present setting the MC model may be obtained by considering the Heaviside-type interaction function $R(x_i,x_j)=:R_{MC}(x_i,x_j)$
\begin{equation*}
R_{MC}(x,x_*)=
\begin{cases}
1 & x<x_* \\
1/2 & x=x_* \\
0 & x>x_*.
\end{cases}
\end{equation*}
The function $R_{MC}(\cdot,\cdot)$ may be approximated through a smoothed continuous version of the MC model (cMC)
\be\label{eq:R_cMC}
R_{cMC}(x_i,x_j) = \dfrac{1}{1+e^{c(x_i-x_j)}},
\ee
with $c>>1$.\\

In order to reproduce the cited equality bias we consider here a competence based interaction function $R(x_i,x_j)=:R_{EB}(x_i,x_j)$ with the following properties: if the competences $x_i$ and $x_j$ are very close together, i.e. in the homogeneous case, there are not appreciable changes in the dynamics of the model, while if $x_i$ and $x_j$ sensibly differ we have $R_{EB}(x_i,x_j)\simeq 1$. An example is given by the sigmoid function
\begin{equation}\label{eq:R_eqbias}
 R_{EB}(x_i,x_j)=\dfrac{1}{1+e^{-c(x_i-x_j)}},
\end{equation}
with $c>0$ a given constant. We depict in Figure \ref{fig:1} the functions $R_{cMC}(\cdot,\cdot)$ and $R_{EB}(\cdot,\cdot)$ defined in \eqref{eq:R_cMC}-\eqref{eq:R_eqbias} for several choices of the constant $c>0$. \\

Observe how, both in the EB and cMC cases, the element of the matrix $\mathcal R_N$ introduced in \eqref{eq:R} is such that
\begin{equation*}
r_{ij}= 1-r_{ji},\qquad i,j=1,\dots,N.
\end{equation*}
The problem to study the eigenvalues distribution of the matrix $\mathcal L_N$ is not in general an easy task, and it depends on the connectivity coefficients index of the model. Under suitable assumptions, it has been addressed in \cite{CS} and \cite{Shen}. In our case the Laplacian matrix is not symmetric and a strategy similar to \cite{CS} cannot be used.
For this reason, in the following, we will face the problem of the eigenvalues distribution under simplifying assumptions. \\

\begin{figure}[tb]
\centering
\includegraphics[scale=0.28]{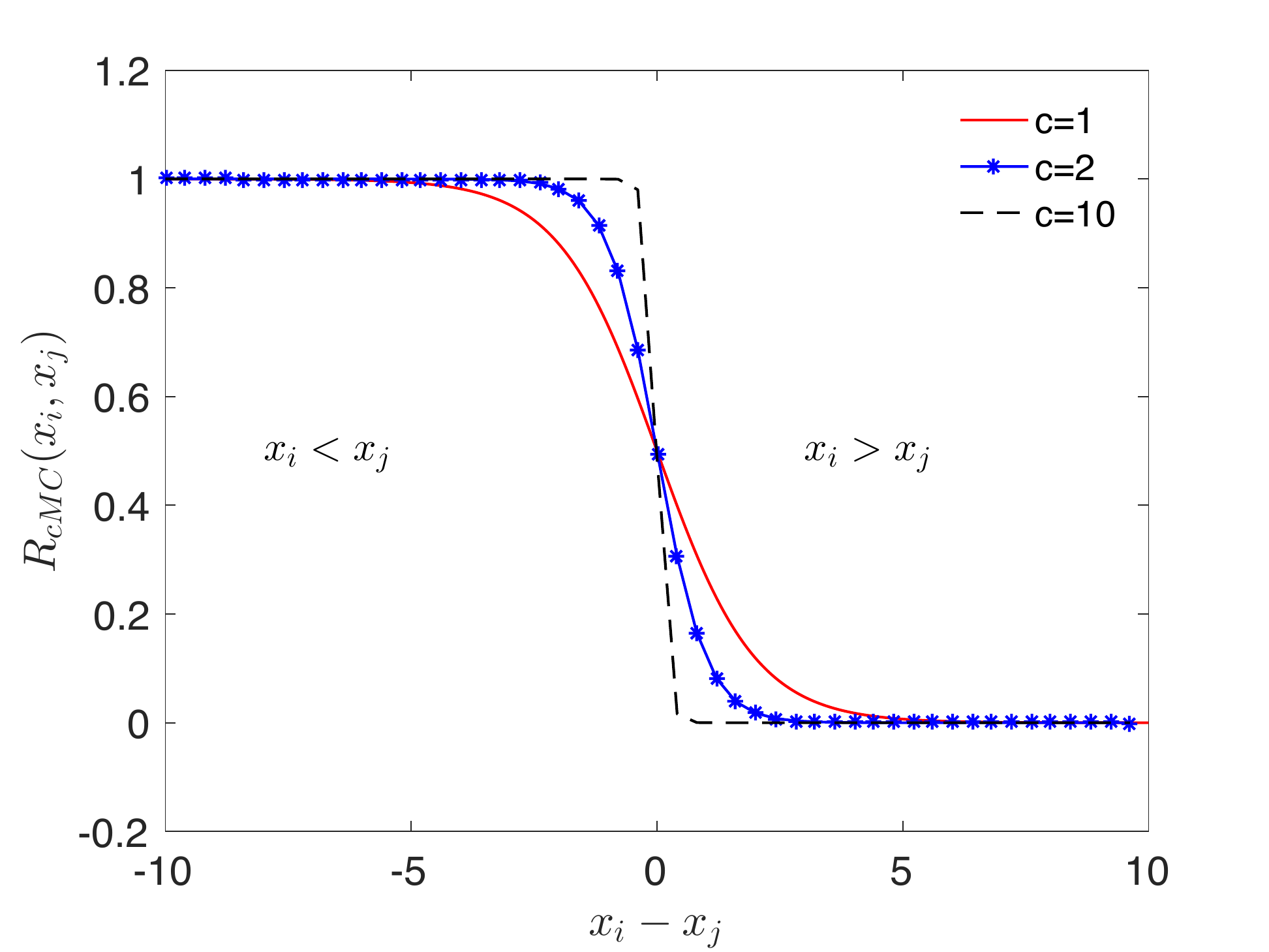}
\includegraphics[scale=0.28]{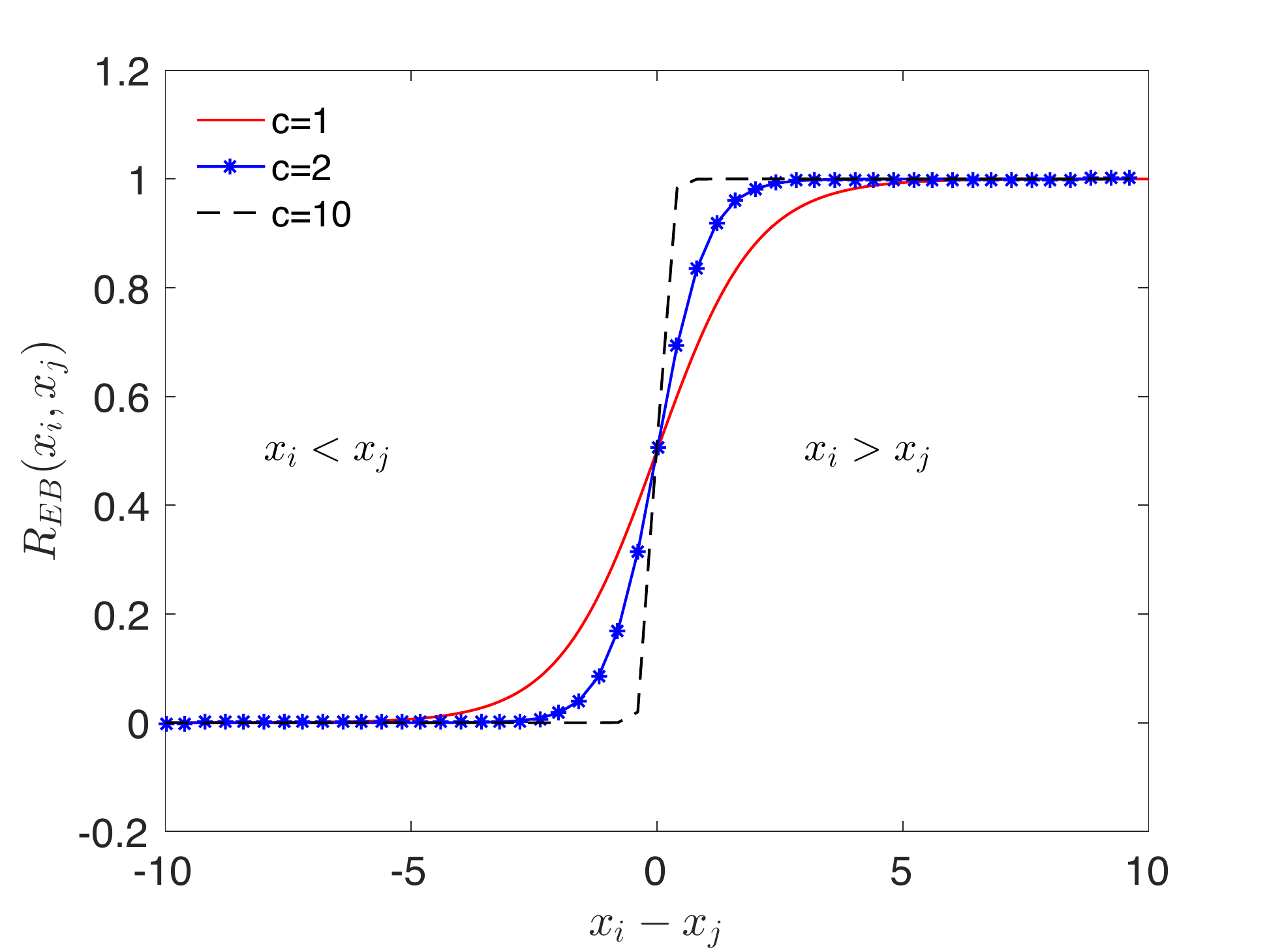}
\caption{We depict the functions $R_{cMC}(x_i,x_j)$ and $R_{EB}(x_i,x_j)$ defined in \eqref{eq:R_cMC} and \eqref{eq:R_eqbias} for several choices of the constant $c>0$. In the (cMC) case (left) in the half-plane $x_i>x_j$ the $i$th agent is scarcely influenced by the agent $j$th, the less competent one. On the other hand, in the (EB) case (right) we have the opposite situation in the half-plane $x_i>x_j$ and the less competent are not influenced by the more skilled agents. }
\label{fig:1}
\end{figure}

We introduce the concepts of collective decision \cite{G97,GZ}.
\begin{definition}[Collective decision]\label{def:1}
Let us consider a system of $N\in\mathbb N$ agents with competence and opinion $(x_i,w_i)_{i=1,\dots,N}$. We define the collective decision of the system the quantity
\[
\bar w = \dfrac{1}{N}\sum_{i=1}^N w_i.
\]
\end{definition}


\begin{definition}[Optimal collective decision]
\label{def:3}
Let $(x_i,w_i)_{i=1,\dots,N}$ be a system of $N\in\mathbb N$ interacting agents. A collective decision is said to be \emph{optimal} if
\begin{equation*}
\bar w = w_k,\qquad \textrm{such that}\qquad x_k=\max_{i=1,\dots,N}x_i.
\end{equation*}
\end{definition}
We have introduced a definition of optimal decision which is rather different from the one in \cite{PVZ}. In the cited work the optimal decision of the interacting system is suggested by external factors through and embedded in the dynamics through a self--propulsion term, whereas the introduced optimal decision depends on the maximal competence of the considered system of agents and is defined a priori as the decision of the most competent agents of the system. \\

In the rest of the paper we focus on two main situations described by the following assumptions:
\begin{assumption}\label{ass:1}
If the competence of the agents does not enter in the dynamics we have
\[
r_{ij}= r \qquad \textrm{for all}~i,j=1,\dots,N.
\]
\end{assumption}
\begin{assumption}\label{ass:2}
The system of agents is divided in two populations, competent and incompetent agents, belonging to the sets $S$ and $U$ respectively. The interaction function in the MC case reads
\[
r_{ij}=
\begin{cases}
0 & i\in S, j\in U\\
\dfrac{1}{2} & i,j\in S~ \textrm{or}~ i,j\in U\\
1 & i\in U, j\in S,
\end{cases}
\]
whereas it simplifies in the EB case as follows
\[
r_{ij}=
\begin{cases}
1 & i\in S, j\in U\\
\dfrac{1}{2} & i,j\in S~ \textrm{or}~ i,j\in U\\
0 & i\in U, j\in S.
\end{cases}
\]
\end{assumption}
The Assumption \ref{ass:1} describes the case in which the competences of individuals of the whole group does not depend on time and are very close together, i.e. $x_i\approx x_j$ for all $i,j=1,\dots,N$. We will analyze this simple case by assuming from $r=1/2$. On the other hand Assumption \ref{ass:2} define simplified interaction rules, which are coherent with $R_{MC}(\cdot,\cdot)$ and $R_{EB}(\cdot,\cdot)$ for $c\gg 1$. Observe how in this case the evolution of the competence variable does not permit the decoupling of the introduced decision dynamics and to derive explicit stationary decisions of the system. In the following we derive the general structure of the eigenvalues of the Laplacian matrix showing how the introduction of these interaction rules leads the agents toward an optimal or suboptimal collective decisions respectively in the MC and EB cases.

\section{Properties of the model}
\label{sec:simpl}
In this section we investigate the structure and properties of the matrices defined in the last section.

\begin{proposition}
\label{p:1}
Let us consider the matrices $\mathcal L_N=\mathcal D_N-\mathcal R_N$ defined in \eqref{eq:R}-\eqref{eq:D}. Then we have:
\begin{itemize}
  \item[(i)] The entries of $\mathcal L_N$ are given by
 \begin{equation}\label{eq:LN_hyp}
( \mathcal L_N)_{ij}=
\begin{cases}
(1-\delta_{i1})\left(i-1-\displaystyle\sum_{k=1}^{i-1}r_{ki}\right)+ (1-\delta_{iN}) \displaystyle\sum_{k=i+1}^N r_{ik} & \mbox{if}\ i = j \\
-r_{ij} & \mbox{if}\ i < j \\
r_{ji}-1 & \mbox{if}\ i > j
\end{cases}
\end{equation}
where
\[
\delta_{ij} =
\begin{cases}
0 & i \neq j,   \\
1 & i=j.
  \end{cases}
\]
is the Kronecker's delta function. Further, the expression of $\mathcal L_N$ at the may be written in terms of $\mathcal L_{N-1}$ as follows
\begin{equation}
\label{eq:15}
\mathcal L_N=\left(
        \begin{matrix}
         \mathcal L_{N-1}+\mathcal H_{N-1} & -\overrightarrow{h}_{N-1}^T \\
          \overrightarrow{h}_{N-1}-\overrightarrow{1}_{N-1} & N-1-\sum_{i=1}^{N-1} r_{i,N} \\
        \end{matrix}
      \right)
\end{equation}
where $\mathcal L_1=(0)$ and we introduced the terms $\overrightarrow{h}_{N-1}=[r_{iN}]_{i=1,\dots,N-1}$, $\overrightarrow{1}_{N-1}=[\underbrace{1,\dots,1}_{N-1}]$ and the diagonal matrix $\mathcal H_{N-1}\in \textrm{Mat}_{N-1}([0,1])$ defined as
\be\label{eq:HN1}
(\mathcal H_{N-1})_{ij} =
\begin{cases}
r_{iN} & \textrm{if $i=j$},\\
0 & \textrm{otherwise}\, .
\end{cases}
\ee
  \item[(ii)] The matrix $\mathcal L_N$ is singular.
  \item[(iii)] For $N\geq 2$, $\operatorname{tr}(\mathcal L_N)=\frac{N(N-1)}{2}$.
\end{itemize}
\end{proposition}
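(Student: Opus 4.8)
The plan is to establish the three items in order; all of them rest on the single structural identity $r_{ij}=1-r_{ji}$ recorded just before the proposition, which in particular forces $r_{ii}=1/2$.

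For part (i), I would compute the entries of $\mathcal L_N=\mathcal D_N-\mathcal R_N$ directly from \eqref{eq:R}--\eqref{eq:D}. The off-diagonal entries are immediate: $(\mathcal L_N)_{ij}=-r_{ij}$, and for $i>j$ the identity $r_{ij}=1-r_{ji}$ rewrites this as $r_{ji}-1$. For the diagonal I would start from $(\mathcal L_N)_{ii}=\sum_{j\neq i}r_{ij}$ and split the sum into the lower part $\sum_{k=1}^{i-1}r_{ik}$ and the upper part $\sum_{k=i+1}^N r_{ik}$; applying $r_{ik}=1-r_{ki}$ only to the lower part produces the contribution $(i-1)-\sum_{k=1}^{i-1}r_{ki}$, while the Kronecker factors $1-\delta_{i1}$ and $1-\delta_{iN}$ merely suppress the empty sums at $i=1$ and $i=N$. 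This yields \eqref{eq:LN_hyp}.

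For the recursive block form \eqref{eq:15} I would partition $\mathcal L_N$ along its last row and column. The border is read off directly from the entry formula: the last column above the corner is $(-r_{iN})_i=-\overrightarrow{h}_{N-1}^{\,T}$, the last row left of the corner is $(r_{iN}-1)_i=\overrightarrow{h}_{N-1}-\overrightarrow{1}_{N-1}$, and the corner is the diagonal entry $(\mathcal L_N)_{NN}=N-1-\sum_{i=1}^{N-1}r_{iN}$, where $\delta_{NN}=1$ removes the upper sum. The only point requiring care is the leading $(N-1)\times(N-1)$ block: its off-diagonal entries coincide verbatim with those of $\mathcal L_{N-1}$, whereas each diagonal entry of $\mathcal L_N$ carries one extra summand $r_{iN}$ in its upper sum (the $k=N$ term, present in $\mathcal L_N$ but absent in $\mathcal L_{N-1}$). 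Collecting these corrections into the diagonal matrix $\mathcal H_{N-1}$ of \eqref{eq:HN1} gives exactly $\mathcal L_{N-1}+\mathcal H_{N-1}$. This diagonal matching --- reconciling the boundary Kronecker terms of $\mathcal L_{N-1}$ with the truncated sums of $\mathcal L_N$, in particular at the index $i=N-1$ --- is the most delicate bookkeeping, so I expect it to be the main obstacle and would verify it separately for $i=N-1$ and for generic $i\le N-2$.

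Parts (ii) and (iii) are then short. For (ii) I would note that each row of $\mathcal L_N$ has zero sum by construction, since $(\mathcal D_N)_{ii}=\sum_j r_{ij}$ cancels exactly the $i$th row sum of $\mathcal R_N$; hence $\mathcal L_N\overrightarrow{1}_N=0$, so $\overrightarrow{1}_N$ lies in the kernel and $\mathcal L_N$ is singular. For (iii) I would write $\operatorname{tr}(\mathcal L_N)=\sum_i\sum_{j\neq i}r_{ij}=\sum_{i\neq j}r_{ij}$ and pair the ordered terms $(i,j)$ and $(j,i)$; since $r_{ij}+r_{ji}=1$, each of the $\binom{N}{2}$ unordered pairs contributes exactly $1$, so $\operatorname{tr}(\mathcal L_N)=\binom{N}{2}=\frac{N(N-1)}{2}$ for $N\ge 2$.
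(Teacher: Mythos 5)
Your proof is correct, but it proceeds differently from the paper's. The paper establishes both the entry formula \eqref{eq:LN_hyp} and the block recursion \eqref{eq:15} simultaneously by induction on $N$ (base case $\mathcal L_2$, then appending a row and column to pass from $N$ to $N+1$), and proves the trace identity (iii) by a second induction, tracking the two partial sums $P_N=\sum_{i=2}^N\bigl(i-1-\sum_{k<i}r_{ki}\bigr)$ and $Q_N=\sum_{i<k}r_{ik}$ and showing $P_{N+1}+Q_{N+1}=P_N+Q_N+N$. You instead compute the entries of $\mathcal L_N=\mathcal D_N-\mathcal R_N$ directly, read off \eqref{eq:15} from the entry formula by partitioning along the last row and column (with the correct bookkeeping that each diagonal entry of the leading block acquires the extra summand $r_{iN}$, which is precisely $\mathcal H_{N-1}$), and obtain the trace by pairing $r_{ij}+r_{ji}=1$ over the $\binom{N}{2}$ unordered pairs. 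Both arguments rest on the same standing identity $r_{ij}=1-r_{ji}$ (stated in the paper just before the proposition for the cMC and EB cases, and tacitly assumed in the statement). Your route is more elementary and arguably cleaner --- in particular the pairing argument for (iii) replaces an induction with one line --- while the paper's inductive formulation has the advantage that \eqref{eq:15} is produced as the engine of the proof rather than as a corollary, which is the form actually reused in Lemmas \ref{l:1} and \ref{l:2}. Part (ii) is identical in both: the rows of $\mathcal L_N$ sum to zero, so $\overrightarrow{1}_N$ spans a nontrivial kernel.
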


\begin{proof}
\begin{itemize}
\item[(i)] By induction, in the case $N=2$ we have:
\begin{equation*}
\mathcal L_2=
\left(
\begin{matrix}
        r_{11}+r_{12} & 0 \\
        0                       & r_{21}+r_{22}
\end{matrix}
\right)-\left(
\begin{matrix}
        r_{11} & r_{12} \\
        r_{21} & r_{22} \\
\end{matrix}
    \right)=\left(
      \begin{matrix}
        r_{12} & -r_{12} \\
        r_{12}-1 & 1-r_{12} \\
      \end{matrix}
    \right),
\end{equation*}
that is \eqref{eq:LN_hyp} in the case $N=2$. We assume true (\ref{eq:15}) for  any$N\in\mathbb N$, $N>2$, therefore we have
\begin{equation*}
\mathcal L_{N+1}=\mathcal D_{N+1}-\mathcal R_{N+1},
\end{equation*}
that is
\begin{equation}\begin{split}\label{lem1:mat}
\mathcal L_{N+1}=&
\left(
    \begin{array}{cccc}
      r_{1,1}+\dots+r_{1,N+1} &  & 0 \\
          & \ddots & \\
      0 &             & r_{N+1,1}+\dots+r_{N+1,N+1} \\
    \end{array}
  \right)\\
&-\left(
     \begin{array}{cccc}
       r_{1,1} & r_{1,2} & \dots & r_{1,N+1} \\
       r_{2,1} & r_{2,2} & \dots & r_{2,N+1} \\
       \vdots & \vdots & \vdots & \vdots \\
       r_{N+1,1} & r_{N+1,2} & \dots & r_{N+1,N+1} \\
     \end{array}
   \right).
\end{split}\end{equation}
Being $-r_{i,j}=r_{j,i}-1$ for $i>j$ we have that the $(N+1)$th row of \eqref{lem1:mat} is given by
\begin{equation*}
\left(\overrightarrow{h}_{N}-\overrightarrow{1}_{N}, N-\sum_{i=1}^{N} r_{i,N+1}\right),
\end{equation*}
while we can write the $(N+1)$th column as,
\begin{equation*}
\left(\begin{array}{c}
    -\overrightarrow{h}_{N}^T \\
    N-\sum_{i=1}^{N} r_{i,N+1}
  \end{array}\right).
\end{equation*}
Hence, if we define
\begin{equation*}
(\mathcal H_{N+1})_{ij} =
\begin{cases}
r_{i,N+1} & \textrm{if $i=j$},\\
0 & \textrm{otherwise},
\end{cases}
\end{equation*}
we have the first point.

\item[(ii)] It follows form the fact that the vector $(1,\dots,1)$  lies in the kernel of the matrix $\mathcal L_N$.

\item[(iii)] We have to show that
\begin{equation*}
\sum_{i=1}^N (1-\delta_{i1})\left(i-1-\sum_{k=1}^{i-1}r_{ki}\right)+ \sum_{i=1}^N (1-\delta_{iN}) \sum_{k=i+1}^N r_{ik}=\frac{N(N-1)}{2},
\end{equation*}
that is
\be\label{eq:prop1_1}
\sum_{i=2}^N \left(i-1-\sum_{k=1}^{i-1}r_{ki}\right)+ \sum_{i=1}^{N-1} \sum_{k=i+1}^N r_{ik}=\frac{N(N-1)}{2},
\ee
which is true in the case $N=2$.
Therefore, we prove by mathematical induction that equation \eqref{eq:prop1_1} holds for all $N>2$.
Let us define the following objects
\begin{equation*}\begin{split}
P_N&=\sum_{i=2}^N \left(i-1-\sum_{k=1}^{i-1}r_{ki}\right), \\
Q_N&=\sum_{i=1}^{N-1} \sum_{k=i+1}^N r_{ik}.
\end{split}\end{equation*}
It follows that
\begin{equation*}
\begin{split}
P_{N+1}&=P_N+N-\sum_{k=1}^{N}r_{k(N+1)}, \\
Q_{N+1}&=Q_N+\sum_{k=1}^{N}r_{k(N+1)},
\end{split}
\end{equation*}
thus
\begin{equation*}
P_{N+1}+Q_{N+1}=P_N+Q_N+N=\frac{N(N-1)}{2}+N=\frac{N(N+1)}{2},
\end{equation*}
which completes the proof.
\end{itemize}
\end{proof}

\begin{lemma}
\label{l:1}
Let us consider the matrix $\mathcal R_N$ under the Assumption \ref{ass:1}, i.e. $r_{ij}=r\in[0,1]$ for each $i,j=1,\dots N$, $N\ge 2$. Denoting with $\lambda_2^{N-1},\dots, \lambda_{N-1}^{N-1}$ the non-zero eigenvalues of $\mathcal L_{N-1}$, the expression of $\mathcal L_{N}$ is
\begin{equation}
\label{eq:17}
\mathcal L_N=\left(
    \begin{matrix}
      (N-1)r & -r & -r & \ldots & -r \\
      r-1 & \lambda_2^{N-1} & -r & \ldots & -r \\
      \vdots & \dots & \ddots & \ldots & \vdots \\
      r-1 & \ldots & r-1 & \lambda_{N-1}^{N-1} & -r \\
      r-1 & r-1 & \dots & r-1 & (N-1)(1-r) \\
    \end{matrix}
  \right)
\end{equation}
with eigenvalues
\begin{equation}
\label{eq:16}
\lambda_1^N=0, \ \ \lambda_i^N=(i-1)(1-r)+(N+1-i)r, \ \ i=2,\dots,N
\end{equation}
\end{lemma}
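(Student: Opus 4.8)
The plan is to argue by induction on $N$, combining the explicit entries of Proposition \ref{p:1}(i) with a direct evaluation of the characteristic polynomial. For the base case $N=2$ one checks \eqref{eq:17} and \eqref{eq:16} by hand: $\mathcal L_2$ has trace $1$ and (by Proposition \ref{p:1}(ii)) determinant $0$, so its eigenvalues are $0$ and $1=\lambda_2^2$. For the inductive step I would first establish the displayed form \eqref{eq:17}. Specializing Proposition \ref{p:1}(i) to $r_{ij}\equiv r$ gives off-diagonal entries $-r$ above the diagonal and $r-1$ below it, first diagonal entry $(N-1)r$, last diagonal entry $(N-1)(1-r)$, and middle diagonal entries $(i-1)(1-r)+(N-i)r$. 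By the induction hypothesis applied to $\mathcal L_{N-1}$, the $i$-th of these middle entries is exactly $\lambda_i^{N-1}=(i-1)(1-r)+(N-i)r$, which is precisely the identification used in \eqref{eq:17}.

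For the eigenvalues I would compute $\det(\mathcal L_N-\lambda I)$ directly. The crucial device is the determinant-preserving row operation $R_i\mapsto R_i-R_{i+1}$ for $i=1,\dots,N-1$: since all entries above (resp. below) the diagonal equal $-r$ (resp. $r-1$), each new row $i$ is supported only on columns $i$ and $i+1$, with diagonal entry $p_i:=d_i-\lambda-(r-1)$ and super-diagonal entry $q_i:=-r-d_{i+1}+\lambda$, while the last row is untouched and equals $(r-1,\dots,r-1,\,d_N-\lambda)$. A short computation with the explicit diagonal values $d_i$ gives the two key identities $p_i=\lambda_{i+1}^N-\lambda$ and $q_i=\lambda-\lambda_{i+1}^N$, and hence, most importantly, $q_i=-p_i$.

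The reduced matrix is upper bidiagonal in its first $N-1$ rows together with a full bottom row, so the support constraint $\sigma(i)\in\{i,i+1\}$ for $i<N$ forces only the identity and the $N-1$ cyclic permutations $(k,\,k{+}1,\dots,N)$, $k=1,\dots,N-1$, to contribute to the permutation expansion. The identity yields $(d_N-\lambda)\prod_{i=1}^{N-1}p_i$, while the cycle for parameter $k$ carries sign $(-1)^{N-k}$ and value $(r-1)\big(\prod_{i=1}^{k-1}p_i\big)\big(\prod_{i=k}^{N-1}q_i\big)$. Substituting $q_i=-p_i$ turns $\prod_{i=k}^{N-1}q_i$ into $(-1)^{N-k}\prod_{i=k}^{N-1}p_i$, so the two signs cancel and every cyclic term collapses to the same quantity $(r-1)\prod_{i=1}^{N-1}p_i$. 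Summing, $\det(\mathcal L_N-\lambda I)=\big[d_N-\lambda+(N-1)(r-1)\big]\prod_{i=1}^{N-1}p_i$; since $d_N=(N-1)(1-r)$ the bracket is exactly $-\lambda$, which reflects the vanishing row-sums of $\mathcal L_N$ (Proposition \ref{p:1}(ii)). Therefore $\det(\mathcal L_N-\lambda I)=-\lambda\prod_{j=2}^N(\lambda_j^N-\lambda)$, giving precisely the spectrum \eqref{eq:16}.

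The main obstacle is the determinant evaluation itself: one must correctly enumerate the permutations surviving the row reduction and track their signs, and then recognize the sign cancellation produced by $q_i=-p_i$, which makes all $N-1$ off-diagonal contributions \emph{add} rather than interfere. Once this collapse is seen, the vanishing of the bracket $d_N+(N-1)(r-1)$ cleanly extracts the eigenvalue $0$ and leaves the stated product, closing the induction.
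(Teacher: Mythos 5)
Your proposal is correct, and while it obtains the matrix form \eqref{eq:17} in essentially the same way as the paper (by specializing the entries from Proposition \ref{p:1}(i) to $r_{ij}\equiv r$ and recognizing the middle diagonal entries as $\lambda_i^{N-1}$), your treatment of the spectrum is genuinely different. The paper fixes each candidate value $\lambda_i^N$ in turn and exhibits a pair of consecutive rows of $\mathcal L_N-\lambda_i^N Id_N$ that are linearly dependent (both reducing to $(r-1,\dots,r-1,-r,\dots,-r)$ up to sign), which shows each $\lambda_i^N$ is a root of the characteristic polynomial; it never computes the polynomial itself. You instead perform the row reduction $R_i\mapsto R_i-R_{i+1}$, observe that the resulting first $N-1$ rows are bidiagonal with $q_i=-p_i$, and carry out the permutation expansion to get the closed form $\det(\mathcal L_N-\lambda Id_N)=-\lambda\prod_{j=2}^{N}\left(\lambda_j^N-\lambda\right)$. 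I checked the key identities ($p_i=\lambda_{i+1}^N-\lambda$, $q_i=-p_i$, the enumeration of the surviving permutations as the identity plus the cycles $(k,k+1,\dots,N)$ with sign $(-1)^{N-k}$, and the collapse of the bracket $d_N+(N-1)(r-1)$ to $0$) and they all hold. What your route buys is the full factorization of the characteristic polynomial, hence the algebraic multiplicities: this matters precisely in the case $r=1/2$ used later in Corollary \ref{c:1} and Lemma \ref{l:2}, where all nonzero eigenvalues coincide at $N/2$ and the paper's root-by-root verification only shows that $N/2$ is \emph{an} eigenvalue, not that it has multiplicity $N-1$. Your argument closes that gap at no extra cost.
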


\begin{proof}
In the case $N=2$ we have
\begin{equation*}
\lambda_1^2=0,\qquad \lambda_2^2=1
\end{equation*}
which are eigenvalues of the matrix
\begin{equation*}
\mathcal L_2 =\left(
    \begin{matrix}
      r & -r \\
      r-1 & 1-r \\
    \end{matrix}
  \right)
\end{equation*}
We proceed by induction, assume the statement is true for a generic integer $N-1$; by equation (\ref{eq:15}) in Proposition \ref{p:1} we have:
\begin{equation*}
\begin{split}
\mathcal L_N&=\left(
        \begin{matrix}
          \mathcal L_{N-1}+ \mathcal H_{N-1} & -\overrightarrow{h}_{N-1}^T \\
          \overrightarrow{h}_{N-1}-\overrightarrow{1}_{N-1} & (N-1)(1-r) \\
        \end{matrix}
      \right)\\
      &=\left(
        \begin{matrix}
          \mathcal L_{N-1}+r Id_{N-1} & -\overrightarrow{h}_{N-1}^T \\
          \overrightarrow{h}_{N-1}-\overrightarrow{1}_{N-1} & (N-1)(1-r) \\
        \end{matrix}
      \right)
\end{split}
\end{equation*}
where $\overrightarrow{h}_{N-1} = [\underbrace{r,\dots,r}_{N-1}]$ and $Id_{N-1}$ is the identity matrix of size $N-1$. Therefore, $\mathcal L_N$ assumes the form given in \eqref{eq:17} for each $N\ge 2$. Let us consider now $\lambda_i^N$ as in \eqref{eq:16}, we prove that
\begin{equation*}
\det\left(\mathcal L_N-\lambda_i^N Id_N\right)=0, \qquad \textrm{for each $ i=2,\dots, N$.}
\end{equation*}
If $i=2$, the first two rows of $\mathcal L_N-\lambda_2^N Id_N$ are
\be\label{eq:2rows}
\begin{matrix}
    (N-1)r-\lambda_2^N & -r & -r & \ldots & -r \\
    r-1 & \lambda_2^{N-1}-\lambda_2^N & -r & \ldots & -r
  \end{matrix}
\ee
with $\lambda_2^N=1-r+(N-1)r$, and the two rows in \eqref{eq:2rows} are both equal to the array
\begin{equation*}
r-1,-r,\dots,-r
\end{equation*}
For $2<i+1<N$, we consider the $i$th and $(i+1)$th rows of $\mathcal L_N-\lambda_i^N Id_N$, given by
\begin{equation*}
\begin{matrix}
    r-1 & \dots & r-1 & \lambda_i^{N-1}-\lambda_{i+1}^N  & -r & -r & \dots & -r \\
    r-1 & \dots & r-1 & r-1 & \lambda_{i+1}^{N-1}-\lambda_{i+1}^N  & -r & \dots & -r
  \end{matrix}
\end{equation*}
where
\begin{equation*}
\lambda_{i+1}^{N-1}-\lambda_{i+1}^N=-r, \qquad \lambda_i^{N-1}-\lambda_{i+1}^N=r-1.
\end{equation*}
Thus the $i$th and $(i+1)$th rows are linearly dependent. Finally we consider the case $i=N$, we observe that the last two rows of $\mathcal L_N -\lambda_N^N Id_N$
\begin{equation*}
\begin{matrix}
    r-1 & \dots & r-1 & \lambda_{N-1}^{N-1}-\lambda_{N}^N  & -r  \\
    r-1 & \dots & r-1 & r-1 & \lambda_{N}^{N-1}-\lambda_{N}^N
  \end{matrix}
\end{equation*}
are equal, in fact $\lambda_{N-1}^{N-1}-\lambda_{N}^N=r-1$ and $\lambda_{N}^{N-1}-\lambda_{N}^N=-r$. We have proven that $\lambda_i^N$, for $i=2,\dots,N$ defined in \eqref{eq:16} are solutions of characteristic polynomial associated to the matrix $\mathcal L_N$.
\end{proof}

From the Lemma \ref{l:1} we can easily see that the eigenvalues \eqref{eq:16} of $\mathcal L_{N}$ are real and positive. Further it is now easy to show that the Laplacian matrix $\mathcal L_N$ defined in \eqref{eq:laplacian}, in the case described by Assumption \ref{ass:1}, assumes the form given by the following result.
\begin{corollary}
\label{c:1}
Let us consider Assumption \ref{ass:1} with all entries of the matrix $\mathcal R_N$ such that $r_{ij}=\frac{1}{2}$ for each $i,j=1,\dots N$. The Laplacian matrix $\mathcal L_N$ is the following:
\be \label{eq:cor}
\mathcal L_{N}=\left(
    \begin{matrix}
      \frac{N-1}{2} & -\frac{1}{2} & -\frac{1}{2} & \ldots & -\frac{1}{2} \\
      -\frac{1}{2} & \frac{N-1}{2} & -\frac{1}{2} & \ldots & -\frac{1}{2} \\
      \vdots & \dots & \ddots & \ldots & \vdots \\
      -\frac{1}{2} & \ldots & -\frac{1}{2} & \frac{N-1}{2} & -\frac{1}{2} \\
      -\frac{1}{2} & -\frac{1}{2} & \dots & -\frac{1}{2} & \frac{N-1}{2} \\
    \end{matrix}
  \right),
\ee
with eigenvalues
\begin{equation*}
\lambda_1^N=0,\qquad \lambda_i^N=\dfrac{N}{2}, \qquad i=2,\dots,N.
\end{equation*}
\end{corollary}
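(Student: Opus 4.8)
The plan is to obtain both claims by directly specializing the results of Lemma~\ref{l:1} (equivalently Proposition~\ref{p:1}) to the value $r = 1/2$; since Corollary~\ref{c:1} is stated under Assumption~\ref{ass:1} with $r_{ij} = 1/2$ for all $i,j$, no new machinery is needed and the entire statement is a substitution.

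First I would establish the matrix form \eqref{eq:cor}. Starting from the entrywise description \eqref{eq:LN_hyp} in Proposition~\ref{p:1} and setting $r_{ij} = 1/2$, the two off-diagonal cases collapse to the same value: for $i<j$ one has $(\mathcal L_N)_{ij} = -r_{ij} = -1/2$, and for $i>j$ one has $(\mathcal L_N)_{ij} = r_{ji}-1 = 1/2 - 1 = -1/2$. The only computation requiring care is the diagonal. For $1 < i < N$ the diagonal entry equals $\left(i-1 - \sum_{k=1}^{i-1}\tfrac12\right) + \sum_{k=i+1}^N \tfrac12 = \tfrac{i-1}{2} + \tfrac{N-i}{2} = \tfrac{N-1}{2}$, while the boundary cases $i=1$ and $i=N$ reduce (through the Kronecker deltas, using $N\ge 2$) to the single surviving sum $\sum \tfrac12 = \tfrac{N-1}{2}$ as well. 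Hence every diagonal entry is $(N-1)/2$ and every off-diagonal entry is $-1/2$, which is precisely \eqref{eq:cor}.

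Second, for the spectrum I would specialize the eigenvalue formula \eqref{eq:16} of Lemma~\ref{l:1}. Keeping $\lambda_1^N = 0$ and inserting $r = 1/2$ into $\lambda_i^N = (i-1)(1-r) + (N+1-i)r$ gives $\lambda_i^N = \tfrac{i-1}{2} + \tfrac{N+1-i}{2} = \tfrac{N}{2}$ for every $i = 2,\dots,N$, independently of $i$. This yields the claimed eigenvalues, with $0$ simple and $N/2$ of multiplicity $N-1$.

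There is no genuine obstacle here: the statement is a direct corollary of Lemma~\ref{l:1}, and the only step that is not completely immediate is the telescoping of the diagonal sum in \eqref{eq:LN_hyp}, which is routine. As a consistency check one may verify that $\operatorname{tr}(\mathcal L_N) = N\cdot\tfrac{N-1}{2} = \tfrac{N(N-1)}{2}$, matching Proposition~\ref{p:1}(iii), and that the eigenvalues sum to $(N-1)\cdot\tfrac{N}{2} = \tfrac{N(N-1)}{2}$ as well, confirming both computations agree.
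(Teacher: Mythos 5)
Your proposal is correct and matches the paper's (implicit) argument: the paper presents Corollary~\ref{c:1} as an immediate specialization of Lemma~\ref{l:1} (and Proposition~\ref{p:1}) to $r=1/2$, which is exactly the substitution you carry out, and your diagonal computation, eigenvalue evaluation, and trace consistency check are all accurate. Nothing further is needed.
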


In the following we utilize the notations $\mathcal L_N^{MC}$ and $\mathcal L_N^{EB}$ for the Laplacian matrix under the maximum competence and equality bias case respectively as in Assumption \ref{ass:2}. The Laplacian matrix under the Assumption \ref{ass:1} will be denoted with $\mathcal L_N^{(1)}$.

In the following example we establish the structures of the matrices $\mathcal L_N^{MC}$ and $\mathcal L_N^{EB}$. More exhaustive results are proposed in Lemma \ref{l:2}.
\begin{example}
\label{ex:1a}


In this example we consider four agents where $i=1,2$ have high competence and the agents $i=3,4$ have no competence (assuming $r_{12}=r_{21}=r_{34}=r_{43}=\frac{1}{2}$). We denote it with \textbf{Case a}. Afterward, we consider the \textbf{Case b}, where $i=1$ has high competence and the agents $i=2,3,4$ have no competence. All the computations refer to the case of wide competence gap for the EB model, or equivalently to the case $c>>1$.



\textbf{Case a (MC model)}. Let
\begin{equation*}
r_{13}=r_{14}=r_{23}=r_{24}=0, \qquad r_{31}=r_{41}=r_{32}=r_{42}=1
\end{equation*}
and
\begin{equation*}
r_{12}=r_{21}=r_{34}=r_{43}=\frac{1}{2}\, .
\end{equation*}
Hence
\renewcommand{\arraystretch}{2}
\begin{equation}\label{eq:21}\newcommand*{\temp}{\multicolumn{1}{r|}{}}
\mathcal L_{4}^{MC}=\left(
    \begin{matrix}
      \frac{1}{2} & -\frac{1}{2}&\temp & 0 & 0 \\
      -\frac{1}{2} & \frac{1}{2}&\temp & 0 & 0 \\ \cline{1-5}
      -1 & -1&\temp & \frac{5}{2} & -\frac{1}{2} \\
      -1 & -1&\temp & -\frac{1}{2} & \frac{5}{2} \\
    \end{matrix}
  \right)
\end{equation}
which is a matrix of triangular block form. The characteristic polynomial of $\mathcal L_4^{MC}$ is
\begin{equation*}
p\left(\mathcal L_{4}^{MC},\lambda\right)=p\left(\mathcal L_2^{(1)},\lambda\right)\cdot p\left(\mathcal L_2^{(1)}+2Id_2,\lambda\right) \, ,
\end{equation*}
and so the eigenvalues are $\{0,1,2,3\}$. The possibility of calculate the eigenvalues of $\mathcal L_{4}^{MC}$ from those of $\mathcal L_2^{(1)}$, is due to a decoupling of two effects: the effect imposed on the system by highly skilled agents, and the effect of the less competent agents. Note that, since the first two agents are those most competent, the upper left block in \eqref{eq:21} is related to highly skilled agents ($i=1,2$) while the lower right block is due to less competent agents ($i=3,4$).


\textbf{Case a (EB model)}.
\begin{equation*}
r_{13}=r_{14}=r_{23}=r_{24}=1, \qquad r_{31}=r_{41}=r_{32}=r_{42}=0
\end{equation*}
and
\begin{equation*}
r_{12}=r_{21}=r_{34}=r_{43}=\frac{1}{2}.
\end{equation*}
Hence
\renewcommand{\arraystretch}{2}
\begin{equation}\label{eq:21}\newcommand*{\temp}{\multicolumn{1}{r|}{}}
\mathcal L_{4}^{EB}=\left(
    \begin{matrix}
      \frac{5}{2} & -\frac{1}{2}&\temp & -1 & -1 \\
      -\frac{1}{2} & \frac{5}{2}&\temp & -1 & -1 \\ \cline{1-5}
      0 & 0&\temp & \frac{1}{2} & -\frac{1}{2} \\
      0 & 0&\temp & -\frac{1}{2} & \frac{1}{2} \\
    \end{matrix}
  \right)
\end{equation}
which is a matrix of triangular block form. The characteristic polynomial of $\mathcal L_4^{EB}$ is
\begin{equation*}
p\left(\mathcal L_{4}^{EB},\lambda\right)=p\left(\mathcal L_2^{(1)}+2Id_2,\lambda\right)\cdot p\left(\mathcal L_2^{(1)},\lambda\right)\, ,
\end{equation*}
and so the eigenvalues are still $\{0,1,2,3\}$. As for MC model, we can observe again the decoupling of the effect imposed on the system by highly skilled agents from those of the less competent agents.

\textbf{Case b (MC model)}.
\begin{equation*}
  r_{12}=r_{13}=r_{14}=0, \qquad r_{21}=r_{31}=r_{41}=1
\end{equation*}
and
\begin{equation*}
  r_{11}=r_{22}=r_{23}=r_{24}=r_{32}=r_{33}=r_{34}=r_{42}=r_{43}=r_{44}=\frac{1}{2}\, .
\end{equation*}
Then
\renewcommand{\arraystretch}{2}
\begin{equation*}\newcommand*{\temp}{\multicolumn{1}{r|}{}}
\mathcal L_{4}^{MC}=\left(
    \begin{matrix}
      0 &\temp& 0& & 0 & 0 \\ \cline{1-6}
      -1 &\temp& 2& & -\frac12 & -\frac12 \\
      -1 &\temp &-\frac12& & 2 & -\frac{1}{2} \\
      -1 &\temp &-\frac12& & -\frac{1}{2} & 2 \\
    \end{matrix}
  \right)\, .
\end{equation*}
The characteristic polynomial of $\mathcal L_4^{MC}$ is
\begin{equation*}
p\left(\mathcal L_{4}^{MC},\lambda\right)=\lambda \, p\left(\mathcal L_3^{(1)}+Id_3,\lambda\right)\, ,
\end{equation*}
whose eigenvalues are $\{0,1,\frac52,\frac52\}$.

\textbf{Case b (EB model)}.
\begin{equation*}
  r_{12}=r_{13}=r_{14}=1, \qquad r_{21}=r_{31}=r_{41}=0
\end{equation*}
and
\begin{equation*}
  r_{11}=r_{22}=r_{23}=r_{24}=r_{32}=r_{33}=r_{34}=r_{42}=r_{43}=r_{44}=\frac{1}{2}\, .
\end{equation*}
Hence
\renewcommand{\arraystretch}{2}
\begin{equation*}\newcommand*{\temp}{\multicolumn{1}{r|}{}}
\mathcal L_{4}^{EB}=\left(
    \begin{matrix}
      3 &\temp& -1& & -1 & -1 \\ \cline{1-6}
      0 &\temp& 1& & -\frac12 & -\frac12 \\
      0 &\temp &-\frac12& & 1 & -\frac{1}{2} \\
      0 &\temp &-\frac12& & -\frac{1}{2} & 1 \\
    \end{matrix}
  \right)\, .
\end{equation*}
The characteristic polynomial of $\mathcal L_4^{EB}$ is
\begin{equation*}
p\left(\mathcal L_{4}^{EB},\lambda\right)=(\lambda-3) \, p\left(\mathcal L_3^{(1)},\lambda\right)\, ,
\end{equation*}
whose eigenvalues are $\{0,\frac{3}{2},\frac{3}{2},3\}$.
\end{example}
In the following Lemma we generalize this approach denoting with $N_1$ the number of incompetent agents which may vary in time. Besides, we recall the notation introduced with Proposition \ref{p:1}: $\overrightarrow{h}_{N-1}=[r_{1,N},\dots,r_{N-1,N}]$, $\overrightarrow{1}_{N-1}=[\underbrace{1,\dots,1}_{N-1}]$, $\overrightarrow{0}_{N-1}=[\underbrace{0,\dots,0}_{N-1}]$), and $\mathcal H_{N-1}$ is the diagonal matrix defined in \eqref{eq:HN1}.

From the Lemma \ref{l:1} we can easily see that the eigenvalues \eqref{eq:16} of $\mathcal L_{N}$ are real and positive. Further it is now easy to show that the Laplacian matrix $\mathcal L_N$ defined in \eqref{eq:laplacian}, in the case described by Assumption \ref{ass:2}, assumes the form given by the following result.

\begin{lemma}
\label{l:2}
Under Assumption \ref{ass:2} let us consider a system of $N+N_1\in \mathbb N$, $N\ge1$, $N_1\ge1$ interacting agents such that $S=\{1,\dots,N\}$ and $U=\{N+1,\dots,N+N_1\}$. We define the following rectangular matrices
$
\mathcal J \in \textrm{Mat}_{N-N_1,N_1}(\{1\}),
\mathcal O \in\textrm{Mat}_{N_1,N-N_1}(\{0\})
$, i.e.
\begin{equation*}
\begin{split}
(\mathcal J)_{ij} = 1,\qquad \textrm{for all $1\le i \le N-N_1, 1\le j \le N_1$}, \\
(\mathcal O)_{ij} = 0,\qquad \textrm{for all $1\le i \le N_1, 1\le j \le N-N_1$}. \\
\end{split}
\end{equation*}
Then, in the case $c>>1$, we have the following claims:
\begin{enumerate}
  \item[(i)] The Laplacian matrix for the MC model is given by
  \begin{equation}\label{eq:22}\newcommand*{\temp}{\multicolumn{1}{r|}{}}
\mathcal L_{N+N_1}^{MC}=\left(
          \begin{array}{ccc}
            \mathcal L_{N}^{(1)}\quad & \temp &\quad \mathcal O \\ \cline{1-3}
            - \mathcal J & \temp &\quad  \mathcal L_{N_1}^{(1)}+N Id_{N_1} \\
          \end{array}
        \right)
\end{equation}
its characteristic polynomial is given by
\begin{equation}\label{eq:23}
p\left(\mathcal L_{N+N_1}^{MC},\lambda\right)=p\left(\mathcal L_{N}^{(1)},\lambda\right)\cdot p\left(\mathcal L_{N_1}^{(1)}+N\, Id_{N_1},\lambda\right) \, .
\end{equation}
with eigenvalues:
\begin{equation*}\begin{split}
&\lambda_1=0, \qquad \lambda_{2}=\dots=\lambda_N=\frac{N}{2},\\
&\lambda_{N+1}=N,\qquad \lambda_{N+2}=\dots=\lambda_{N+N_1}=\frac{N_1}{2}+N,
\end{split}\end{equation*}
  \item[(ii)] The Laplacian matrix for the EB model is given by
\begin{equation}\label{eq:22b}\newcommand*{\temp}{\multicolumn{1}{r|}{}}
\mathcal L_{N+N_1}^{EB}=\left(
          \begin{array}{ccc}
            \mathcal L_{N}^{(1)}+N_1 Id_{N}\quad & \temp &\quad - \mathcal J \\ \cline{1-3}
            \mathcal O & \temp &\quad  \mathcal L_{N_1}^{(1)} \\
          \end{array}
        \right)
\end{equation}
its characteristic polynomial is given by
\begin{equation}\label{eq:23b}
p\left(\mathcal L_{N+N_1}^{EB},\lambda\right)=p\left(\mathcal L_{N}^{(1)}+N_1\, Id_{N},\lambda\right)\cdot p\left(\mathcal L_{N_1}^{(1)},\lambda\right)
\end{equation}
with eigenvalues
\begin{equation*}\begin{split}
&\lambda_{1}=N_1,\qquad \lambda_2=\dots=\lambda_{N}= \frac{N}{2}+N_1,\\
&\lambda_{N+1}=\dots=\lambda_{N+N_1-1}= \frac{N_1}{2},\qquad \lambda_{N+N_1}=0.
\end{split}\end{equation*}
\end{enumerate}
\end{lemma}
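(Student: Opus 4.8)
The plan is to establish both claims by an explicit block computation of the Laplacian $\mathcal{L}_{N+N_1} = \mathcal{D}_{N+N_1} - \mathcal{R}_{N+N_1}$, organized according to the partition of the index set into $S = \{1,\dots,N\}$ and $U = \{N+1,\dots,N+N_1\}$. First I would determine the diagonal degree matrix $\mathcal{D}_{N+N_1}$ by summing each row of $\mathcal{R}_{N+N_1}$ over the two populations. In the MC case the interaction rules of Assumption \ref{ass:2} give, for $i \in S$, a row contribution $1/2$ on the $N$ columns indexed by $S$ and $0$ on the $N_1$ columns indexed by $U$, hence $(\mathcal{D}_{N+N_1})_{ii} = N/2$; for $i \in U$ the contribution is $1$ on $S$ and $1/2$ on $U$, hence $(\mathcal{D}_{N+N_1})_{ii} = N + N_1/2$. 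Subtracting $\mathcal{R}_{N+N_1}$ block by block then reproduces the four blocks appearing in \eqref{eq:22}.

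Next I would identify each diagonal block with a matrix furnished by Corollary \ref{c:1}. The $S$-block has diagonal $(N-1)/2$ and off-diagonal $-1/2$, so it is precisely $\mathcal{L}_N^{(1)}$; the $U$-block has diagonal $N + (N_1-1)/2$ and off-diagonal $-1/2$, so it is $\mathcal{L}_{N_1}^{(1)} + N\, Id_{N_1}$. The decisive structural feature is that in the MC model competent agents assign no weight to incompetent opinions, i.e. $r_{ij}=0$ for $i \in S,\ j \in U$, so the upper-right block of $\mathcal{L}_{N+N_1}^{MC}$ is the zero matrix $\mathcal{O}$, while the lower-left block is the all-$(-1)$ matrix $-\mathcal{J}$. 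Consequently $\mathcal{L}_{N+N_1}^{MC}$ is block lower-triangular, and since the determinant of a block-triangular matrix equals the product of the determinants of its diagonal blocks, we obtain the factorization \eqref{eq:23}. The spectrum then follows from Corollary \ref{c:1}: the first factor contributes $0$ together with $N/2$ of multiplicity $N-1$, while shifting the spectrum $0,\ N_1/2$ (the latter of multiplicity $N_1-1$) of $\mathcal{L}_{N_1}^{(1)}$ by $N$ contributes $N$ together with $N_1/2 + N$ of multiplicity $N_1-1$, which is exactly the asserted list.

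For the EB model I would run the identical argument with the off-diagonal interaction reversed, namely $r_{ij}=1$ for $i \in S,\ j \in U$ and $r_{ij}=0$ for $i \in U,\ j \in S$. The degree computation now yields $(\mathcal{D}_{N+N_1})_{ii} = N/2 + N_1$ on $S$ and $(\mathcal{D}_{N+N_1})_{ii} = N_1/2$ on $U$, so the diagonal blocks become $\mathcal{L}_N^{(1)} + N_1\, Id_N$ and $\mathcal{L}_{N_1}^{(1)}$, exactly as in \eqref{eq:22b}. This time it is the lower-left block that vanishes, because incompetent agents ignore the competent ones, leaving a block upper-triangular matrix whose characteristic polynomial factors as in \eqref{eq:23b}; reading off the eigenvalues from Corollary \ref{c:1}, with the shift by $N_1$ applied to the $S$-block, produces the stated spectrum.

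The only genuinely delicate step, and the one I would verify most carefully, is the bookkeeping of the degree sums $(\mathcal{D}_{N+N_1})_{ii}$ over the two populations, since these alone decide whether each diagonal block is an unshifted or a shifted copy of $\mathcal{L}^{(1)}$; once the block decomposition is confirmed, the spectral conclusions reduce entirely to Corollary \ref{c:1} combined with the block-triangular determinant identity.
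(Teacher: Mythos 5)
Your proposal is correct, and the block decompositions, degree sums, and eigenvalue lists all check out (for $i\in S$ in the MC case the row sum of $\mathcal R_{N+N_1}$ is indeed $N/2$ including $r_{ii}=1/2$, so the diagonal entry of the Laplacian is $(N-1)/2$, matching $\mathcal L_N^{(1)}$ from Corollary \ref{c:1}, and similarly for the other three blocks and for the EB case). However, your route differs from the paper's. The paper proves both claims by induction on $N_1$: the base case $N_1=1$ and the inductive step are each obtained by applying the bordering recursion \eqref{eq:15} of Proposition \ref{p:1}, i.e.\ by appending one incompetent agent at a time and tracking how the diagonal correction $\mathcal H_{N+N_1}$ and the border row and column $\overrightarrow{h}$, $\overrightarrow{h}-\overrightarrow{1}$ update the blocks. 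You instead assemble $\mathcal L_{N+N_1}=\mathcal D_{N+N_1}-\mathcal R_{N+N_1}$ in one shot from the interaction rules of Assumption \ref{ass:2}, which is shorter, avoids the induction entirely, and makes the block-triangular structure (hence the factorization of the characteristic polynomial) immediately visible; what you give up is the connection to the recursive formula \eqref{eq:15}, which the paper is also exercising for its own sake since it underlies Example \ref{ex:1a} and the general machinery of Proposition \ref{p:1}. Both arguments ultimately reduce the spectral claims to Corollary \ref{c:1} together with the determinant identity for block-triangular matrices, so your endgame coincides with the paper's.
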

\begin{proof}
\begin{itemize}
\item[(i)] We proceed by induction. Let us consider $N_1=1$, therefore $S=\left\{i=1, \dots ,N\right\}$ and $U=\left\{N+1\right\}$.
From Proposition \ref{p:1}, eq. \eqref{eq:15}, we have
\renewcommand{\arraystretch}{2}
\begin{equation*}
\newcommand*{\temp}{\multicolumn{1}{r|}{}}
\mathcal L_{N+1}^{MC}=\left(
	\begin{matrix}
         \mathcal L_{N}^{(1)}+\mathcal H_{N} & \temp & - \overrightarrow{h}_N^T \\ \cline{1-3}
            \overrightarrow{h}_N-\overrightarrow{1}_N & \temp & N-\sum_{i=1}^{N} r_{i\ N+1} \\
          \end{matrix}
          \right)
\end{equation*}
where $r_{i,\, N+1}=0$,  for each $i=1,\dots, N$. Hence $\mathcal L_{N+1}^{MC}$ assumes the following form
\renewcommand{\arraystretch}{2}
\begin{equation*}
\newcommand*{\temp}{\multicolumn{1}{r|}{}}
 \mathcal L_{N+1}^{MC}=\left(
 \begin{matrix}
            \mathcal L_{N}^{(1)}+Id_{N} & \temp & \overrightarrow{0}_N^T \\ \cline{1-3}
            - \overrightarrow{1}_N & \temp & N \\
          \end{matrix}
\right)
\end{equation*}
The first step has been showed. Let us assume that the results in \eqref{eq:22} and \eqref{eq:23} hold. We add another incompetent agent such that: $S=\left\{1,\dots,N\,\right\}$, $U=\{N+1,\dots,N+N_1+1\}$. From Proposition \ref{p:1} we have
\renewcommand{\arraystretch}{2}
\begin{equation*}
\newcommand*{\temp}{\multicolumn{1}{r|}{}}
\mathcal L_{N+N_1+1}^{MC}=\\
\left(
          \begin{array}{ccc}
            \left(
          \begin{array}{ccc}
            \mathcal L_{N}^{(1)} & \temp & \mathcal O \\ \cline{1-3}
            -\mathcal J & \temp & \mathcal L_{N_1}^{(1)}+ N Id_{N_1} \\
          \end{array}
        \right)+\mathcal H_{N+N_1}& \temp & - \overrightarrow{h}_{N+N_1}^T \\ \cline{1-3}
            \overrightarrow{h}_{N+N_1}-\overrightarrow{1}_{N+N_1} & \temp & N+N_1-\sum_{i=1}^{N+N_1} r_{i,\, N+N_1+1} \\
          \end{array}
        \right),
\end{equation*}
where
\begin{align*}
  r_{1,N+N_1+1}= & \dots=r_{N,N+N_1+1}=0 \\
  r_{N+1,N+N_1+1}= & \dots=r_{N+N_1,N+N_1+1}=\frac{1}{2} \, ,
\end{align*}
and so
\renewcommand{\arraystretch}{2}
\begin{equation*}\begin{split}
\newcommand*{\temp}{\multicolumn{1}{r|}{}}
\left(\begin{array}{ccc}
            \mathcal L_{N}^{(1)}& \temp & \mathcal O \\ \cline{1-3}
            -\mathcal J & \temp & \mathcal L_{N_1}^{(1)}+ N Id_{N_1} \\
          \end{array}\right)+\mathcal H_{N+N_1}
=\left(\begin{array}{ccc}
            \mathcal L_{N}^{(1)}& \temp & \mathcal O \\ \cline{1-3}
            -\mathcal J & \temp & \mathcal L_{N_1}^{(1)}+\left(N+ \frac{1}{2}\right) Id_{N_1} \\
          \end{array}\right).
\end{split} \end{equation*}
Rows $-\overrightarrow{h}_{N+N_1}$ and $\overrightarrow{h}_{N+N_1}-\overrightarrow 1_{N+N_1}$ are respectively given by
$$\underbrace{0\, , \, \dots \, , \, 0}_{N},\underbrace{-\frac{1}{2}\, , \, \dots \, , \, -\frac{1}{2}}_{N_1}$$
$$\underbrace{-1\, ,\, \dots \, , \,-1}_{N},\underbrace{-\frac{1}{2}\, , \, \dots \, , \, -\frac{1}{2}}_{N_1}\, ,$$
while
$$N+N_1-\sum_{i=1}^{N+N_1} r_{i,\, N+N_1+1}=N+\frac{N_1}{2}\, .$$
From Corollary \ref{c:1}, the main diagonal of $L_{N_1}^{(1)}$ contains all $\frac{N_1-1}{2}$, hence the inductive step is proved.

\item[(ii)] We proceed by induction. Let us consider $N_1=1$, therefore $S=\left\{1, \dots ,N\right\}$ and $U=\left\{N+1\right\}$.
From Proposition \ref{p:1}, eq. \eqref{eq:15}, we have
\renewcommand{\arraystretch}{2}
\begin{equation*}
\newcommand*{\temp}{\multicolumn{1}{r|}{}}
\mathcal L_{N+1}^{EB}=\left(
	\begin{matrix}
         \mathcal L_{N}^{(1)}+\mathcal H_{N} & \temp & - \overrightarrow{h}_N^T \\ \cline{1-3}
            \overrightarrow{h}_N-\overrightarrow{1}_N & \temp & N-\sum_{i=1}^{N} r_{i\ N+1} \\
          \end{matrix}
          \right)
\end{equation*}
where $r_{i\ N+1}=1$,  for each $i=1,\dots, N$. Hence $\mathcal L_{N+1}^{EB}$ assumes the following form
\renewcommand{\arraystretch}{2}
\begin{equation*}
\newcommand*{\temp}{\multicolumn{1}{r|}{}}
 \mathcal L_{N+1}^{EB}=\left(
 \begin{matrix}
            \mathcal L_{N}^{(1)}+Id_{N} & \temp & - \overrightarrow{1}_N^T \\ \cline{1-3}
            \overrightarrow{0}_N & \temp & 0 \\
          \end{matrix}
\right)
\end{equation*}
The base step is achieved. Suppose the result (\ref{eq:22}) and (\ref{eq:23}) holds. We add another incompetent agent: $S=\left\{1,\dots, N\,\right\}$, $U=\{N+1,\dots , N+N_1+1\}$. We have, from Proposition \ref{p:1}, eq. (\ref{eq:15}), $\mathcal L_{N+N_1+1}^{EB}=$
\renewcommand{\arraystretch}{2}
\begin{equation*}\newcommand*{\temp}{\multicolumn{1}{r|}{}}
\left(
          \begin{array}{ccc}
            \left(
          \begin{array}{ccc}
           \mathcal L_{N}^{(1)}+N_1\, I_{N} & \temp & - \mathcal J_{N,N_1} \\ \cline{1-3}
           \mathcal O_{N_1,N} & \temp & \mathcal L_{N_1}^{(1)} \\
          \end{array}
        \right)+\mathcal H_{N+N_1}& \temp & - \overrightarrow{h}_{N+N_1}^T \\ \cline{1-3}
            \overrightarrow{h}_{N+N_1}-\overrightarrow{1}_{N+N_1} & \temp & N+N_1-\sum_{i=1}^{N+N_1} r_{i,N+N_1+1} \\
          \end{array}
        \right) \, ,
\end{equation*}
where
\begin{align*}
  r_{1,N+N_1+1}= & \dots=r_{N,N+N_1+1}=1 \\
  r_{N+1,N+N_1+1}= & \dots=r_{N+N_1,N+N_1+1}=\frac{1}{2}.
\end{align*}
Therefore we have
\renewcommand{\arraystretch}{2}
\begin{equation*}\begin{split}
\newcommand*{\temp}{\multicolumn{1}{r|}{}}
\left(\begin{array}{ccc}
            \mathcal L_{N}^{(1)}+N_1\, I_{N} & \temp & - \mathcal J_{N,N_1} \\ \cline{1-3}
            \mathcal O_{N_1,N} & \temp & \mathcal L_{N_1}^{(1)} \\
          \end{array}\right)+\mathcal H_{N+N_1}=
\end{split}\end{equation*}
\begin{equation*}\begin{split}
\renewcommand{\arraystretch}{2}
\newcommand*{\temp}{\multicolumn{1}{r|}{}}
=\left(\begin{array}{ccc}
           \mathcal L_{N}^{(1)}+(N_1+1)\, I_{N} & \temp & - \mathcal J_{N,N_1} \\ \cline{1-3}
           \mathcal O_{N_1,N} & \temp & \mathcal L_{N_1}^{(1)}+\frac{1}{2} I_{N_1} \\
          \end{array}\right)\, .
\end{split}\end{equation*}
Rows $-\overrightarrow{h}_{N+N_1}$ and $\overrightarrow{h}_{N+N_1}-\overrightarrow 1_{N+N_1}$ are, respectively:
$$\underbrace{-1\, ,\, \dots \, , \,-1}_{N},\underbrace{-\frac{1}{2}\, , \, \dots \, , \, -\frac{1}{2}}_{N_1}$$
$$\underbrace{0\, , \, \dots \, , \, 0}_{N},\underbrace{-\frac{1}{2}\, , \, \dots \, , \, -\frac{1}{2}}_{N_1} \, ,$$
while
$$N+N_1-\sum_{i=1}^{N+N_1} r_{i,N+N_1+1}=\frac{N_1}{2}\, .$$
From Corollary \ref{c:1}, the main diagonal of $\mathcal L_{N_1}^{(1)}$ contains all $\frac{N_1-1}{2}$, hence
\begin{equation*}
\mathcal L_{N_1}^{(1)}+\frac{1}{2} I_{N_1}=\mathcal L_{N_1+1}^{(1)}
\end{equation*}
and we can conclude.
\end{itemize}
\end{proof}

{We notice that, in Lemma \ref{l:2}, the zero-eigenvalue appears in the spectrum of $\mathcal L_{N}^{(1)}$ for the MC model and in the spectrum of $\mathcal L_{N_1}^{(1)}$ for the EB model. The matrix $\mathcal L_{N}^{(1)}$ is associated to the set $S$ of competent agents while the matrix $\mathcal L_{N_1}^{(1)}$ is associated to the set $U$ of incompetent agents. In the following result we will prove how at each time $t>0$ the EB model leads the system toward suboptimal decisions with respect to the optimal one given by the MC model at a fixed time.
 As we see in the following theorem, the zero-eigenvalue controls the asymptotic behavior of the system.
\begin{theorem}\label{th:1}
Let us consider a system of $N+N_1\in \mathbb N$, $N\ge1$, $N_1\ge1$ interacting agents, such that at a given time $S=\{1,\dots,N\}$ is the set of competent agents and $U=\{N+1,\dots,N+N_1\}$ is the set of not competent agents. In case of interaction function in Assumption \ref{ass:2} at each $t>0$ the collective decision in case of EB is not optimal.
\end{theorem}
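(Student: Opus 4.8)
The plan is to reduce the decision dynamics to an explicitly solvable linear system and to read off the collective decision from the spectral data already established in Lemma \ref{l:2}. With the notation of Section \ref{sec:micro}, the decisions obey the linear ODE $\dot w = -\kappa\,\mathcal L_{N+N_1}^{EB}\,w$, where $w=(w_1,\dots,w_{N+N_1})^T$ and $\kappa>0$ is the averaging constant, whose precise value only rescales time and is therefore irrelevant to optimality. Set $M=N+N_1$, and introduce $a=\frac1N\sum_{i\in S}w_i(0)$, the autonomous consensus decision of the competent subgroup, and $b=\frac1{N_1}\sum_{i\in U}w_i(0)$, the incompetent average. Since in the MC model the upper-right block of \eqref{eq:22} vanishes, the competent agents evolve independently of $U$ and, by Corollary \ref{c:1}, relax to their own average, so that $a$ is exactly the collective decision selected by the MC model in the sense of Definition \ref{def:3}; this is the optimal value against which I measure the EB outcome.

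First I would identify the conserved functional, which is where the essential difficulty lies. Because $\mathcal L_{N+N_1}^{EB}$ is \emph{not} symmetric, the invariant direction is governed by its \emph{left} null vector rather than by the uniform vector $\overrightarrow 1$. Reading off the block form \eqref{eq:22b} and using that $\mathcal L_{N_1}^{(1)}$ has zero row sums (Corollary \ref{c:1}), one verifies that the indicator $(0,\dots,0,1,\dots,1)$ of the incompetent set $U$ is a left null vector, whence $\sum_{i\in U}w_i(t)\equiv N_1 b$ is conserved along the flow. This is the crux: in sharp contrast with the MC case, where the same computation on \eqref{eq:22} yields conservation of $\sum_{i\in S}w_i$, in the EB model it is the \emph{incompetent} group that anchors the dynamics.

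Next I would close the dynamics for $S(t)=\sum_{i\in S}w_i(t)$. Multiplying the ODE on the left by the indicator of $S$ and using once more the zero-row-sum property of $\mathcal L_N^{(1)}$ together with $\sum_{i\in U}w_i\equiv N_1 b$, the coupling collapses to the scalar linear equation $\dot S=-\kappa N_1\,(S-Nb)$, whose solution with $S(0)=Na$ is $S(t)=Nb+N(a-b)e^{-\kappa N_1 t}$. Assembling the collective decision of Definition \ref{def:1} then gives the closed form
\begin{equation*}
\bar w^{EB}(t)=\frac{1}{M}\bigl(S(t)+N_1 b\bigr)=b+\frac{N}{M}\,(a-b)\,e^{-\kappa N_1 t},
\end{equation*}
and the parallel computation in the MC case produces $\bar w^{MC}(t)=a+\frac{N_1}{M}(b-a)e^{-\kappa N t}$.

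Finally I would compare with the optimal value $a$. Assuming a genuine competence gap, $a\neq b$, the identity $\bar w^{EB}(t)=a$ forces $\frac{N}{M}e^{-\kappa N_1 t}=1$, i.e. $e^{-\kappa N_1 t}=M/N>1$, which is impossible for any $t\ge0$ since $N_1\ge1$; hence $\bar w^{EB}(t)\neq a$ for every $t>0$, and in fact $\bar w^{EB}(t)\to b$, the incompetent average. By contrast $\bar w^{MC}(t)\to a$, so the equality bias is precisely what prevents the system from tracking the most competent agents. The main obstacle, as noted, is the second step: because the Laplacian is non-symmetric one must resist projecting along $\overrightarrow 1$ and instead extract the correct left null vector from \eqref{eq:22b}. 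It is exactly this asymmetry — incompetent agents being influenced by the competent ones but not conversely — that makes $\sum_{i\in U}w_i$, rather than $\sum_{i\in S}w_i$, the conserved quantity, and hence drives the collective decision to the suboptimal value $b$.
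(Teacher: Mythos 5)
Your argument is correct and it takes a genuinely different route from the paper. The paper's proof never writes down the solution: it shows $\dim\ker\mathcal L_N^{(1)}=1$, then verifies eigenvalue by eigenvalue that the algebraic and geometric multiplicities of $\mathcal L_{N+N_1}^{EB}$ coincide (so the matrix is diagonalizable), and concludes from the fact that the zero eigenvalue sits in the spectrum of the incompetent block $\mathcal L_{N_1}^{(1)}$ that the asymptotic state is anchored to $U$ and hence suboptimal. You instead bypass diagonalizability entirely: you extract the \emph{left} null vector of \eqref{eq:22b} (the indicator of $U$), obtain the conservation law $\sum_{i\in U}w_i(t)\equiv N_1 b$, close the scalar ODE for $\sum_{i\in S}w_i$, and read off the explicit formula $\bar w^{EB}(t)=b+\frac{N}{M}(a-b)e^{-\kappa N_1 t}$. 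This is more elementary and buys more: a quantitative decay rate, the limit $\bar w^{EB}(t)\to b$, and a transparent mechanism (the conserved functional is attached to the wrong subpopulation), whereas the paper's spectral argument gives the full eigenstructure but leaves the final ``therefore not optimal'' step essentially implicit. Two caveats, both of which reflect looseness in the paper rather than errors on your part: (1) you need the non-degeneracy hypothesis $a\neq b$, which you state explicitly but which appears in neither the theorem nor the paper's proof --- without it (e.g.\ all agents starting at the same decision) the claim is false, so your version is actually the honest one; (2) Definition \ref{def:3} compares $\bar w(t)$ with $w_k(t)$, the instantaneous decision of the most competent agent, while you compare with the MC consensus value $a$; since the competent agents in the EB dynamics are themselves dragged toward $b$, a fully rigorous finite-time statement would require tracking $w_k(t)$ as well, but your benchmark matches the paper's stated intent that the optimal decision is ``defined a priori'' by the competent subgroup.
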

\begin{proof}
Let us observe that for all $N>1$ we have $\dim (\ker \mathcal L_{N}^{(1)})=1$, in fact, the $(N-1)\times (N-1)$ top left minor in \eqref{eq:cor} has non zero determinant, while $\det (\mathcal L_{N}^{(1)})=0$ for each $N$. Thus, the rank of $\mathcal L_{N}^{(1)}$ is $N-1$.

In the following we denote with $m_a(\lambda)$ and $m_g(\lambda)$ the algebraic and the geometric multiplicity of eigenvalue $\lambda$. Now we can prove that the matrix $\mathcal L_{N+N_1}^{EB}$ is diagonalizable.

In Lemma \ref{l:2} we have shown that the eigenvalues of $\mathcal L_{N+N_1}^{EB}$ are $\lambda_1=\dots=\lambda_{N-1}= \frac{N}{2}+N_1$, $\lambda_ {N}=N_1$, $\lambda_{N+1}=\dots=\lambda_{N+N_1-1}= \frac{N_1}{2}$ and $\lambda_{N+N_1}=0$.

\begin{itemize}
\item
Case $m_g(0)=m_a(0)=1$. Notice that $\lambda=0$ belongs also to the spectrum of $\mathcal L_{N_1}^{(1)}$, $\dim( \ker \mathcal L_{N_1}^{(1)})=1$ and $(1,\dots,1)\in \ker \mathcal L_{N_1}^{(1)}$. Further we have that $\det\left(\mathcal L_{N}^{(1)}+N_1 Id_{N}\right)\neq 0$ and thus $\dim( \ker \mathcal L_{N+N_1}^{EB})=1$.

\item
Case $m_g\left(\frac{N_1}{2}\right)=m_a\left(\frac{N_1}{2}\right)=N_1-1$. We have:
\begin{equation*}
\newcommand*{\temp}{\multicolumn{1}{r|}{}}\mathcal L_{N+N_1}^{EB}-\frac{N_1}{2}\, Id_{N+N_1}=\left(
          \begin{array}{ccc}
            \mathcal L_{N}^{(1)}+\frac{N_1}{2} Id_{N}\quad & \temp &\quad - \mathcal J \\ \cline{1-3}
             \mathcal O & \temp &\quad  \mathcal L_{N_1}^{(1)}-\frac{N_1}{2} Id_{N_1} \\
          \end{array}
        \right)\, ,
\end{equation*}
where $\dim \ker\left(\mathcal L_{N_1}^{(1)}-\frac{N_1}{2} Id_{N_1}\right)=N_1-1$ and $\det\left(\mathcal L_{N}^{(1)}+\frac{N_1}{2} Id_{N}\right)\neq 0$. Therefore we have
\begin{equation*}
\dim \ker \left(\mathcal L_{N+N_1}^{EB}-\frac{N_1}{2}\, Id_{N+N_1}\right)=N_1-1.
\end{equation*}

\item Case $m_g\left(N_1\right)=m_a\left(N_1\right)=1$. Let us consider
\begin{equation*}
\newcommand*{\temp}{\multicolumn{1}{r|}{}}\mathcal L_{N+N_1}^{EB}-N_1\, Id_{N+N_1}=\left(
          \begin{array}{ccc}
            \mathcal L_{N}^{(1)}\quad & \temp &\quad -\mathcal J \\ \cline{1-3}
             \mathcal O & \temp &\quad  \mathcal L_{N_1}^{(1)}-N_1 Id_{N_1} \\
          \end{array}
        \right)\, ,
\end{equation*}
where the eigenvalues of $\mathcal L_{N_1}^{(1)}-N_1 Id_{N_1}$ are $-N_1$ and $-\frac{N_1}{2}$, thus $\det(\mathcal L_{N_1}^{(1)}$ $-N_1 Id_{N_1})\neq 0$. Moreover, $\dim (\ker \mathcal L_{N}^{(1)})=1$ and $(1,\dots,1)\in \ker \mathcal L_{N}^{(1)}$. Accordingly $\dim \ker \Bigl(\mathcal L_{N+N_1}^{EB}$ $-N_1\, Id_{N+N_1}\Bigr)=1$.

\item Case $m_g\left(\frac{N}{2}+N_1\right)=m_a\left(\frac{N}{2}+N_1\right)=N-1$. We now consider the matrix
\begin{equation*}\begin{split}
\mathcal L_{N+N_1}^{EB}&-\left(\frac{N}{2}+N_1\right)\, Id_{N+N_1}\\
&=\newcommand*{\temp}{\multicolumn{1}{r|}{}}\left(
          \begin{array}{ccc}
            \mathcal L_{N}^{(1)}- \frac{N}{2}\, Id_{N}\quad & \temp &\quad - \mathcal J\\ \cline{1-3}
             \mathcal O & \temp &\quad  \mathcal L_{N_1}^{(1)}-\left(\frac{N}{2}+N_1\right)\, Id_{N_1} \\
          \end{array}
        \right)\, ,
\end{split}\end{equation*}
where $\det\left(\mathcal L_{N_1}^{(1)}-\left(\frac{N}{2}+N_1\right)\, Id_{N_1}\right)\neq 0$ and $\dim \ker\left(\mathcal L_{N}^{(1)}-\frac{N}{2} Id_{N}\right)=N-1$. We have shown
\begin{equation*}
\dim \ker\left(\mathcal L_{N+N_1}^{EB}-\left(\frac{N}{2}+N_1\right)\, Id_{N+N_1}\right)=N-1\, .
\end{equation*}
\end{itemize}
The solution of the system of differential equations \eqref{eq:micro} is equivalent to a diagonal system with diagonal entries given by the eigenvalues in Lemma \ref{l:2}. Therefore, the collective decision in the EB case is not optimal.
\end{proof}
\begin{remark}
Observe that the case $N_1=0$ is encompassed in Corollary \ref{c:1} where, under Assumption \ref{ass:1}, all entries of the matrix $\mathcal R_N$ are $r_{ij}=\frac{1}{2}$ for each $i,j=1,\dots N$.
\end{remark}

\section{Numerics}\label{sec:num}
In this section we present several numerical results in order to show the main features of the    system \eqref{eq:micro} under the hypotheses of maximum competence and equality bias. We consider a set of $N=20$ agents, forming an interacting decision-making system. Therefore, we compare the emerging asymptotic collective decisions in the cMC and EB regimes for several $c\ge 1$.

For what it may concern the evolution of the competence variable we consider a background variable $z\in\RR^+$ with uniform distribution $C(z)\sim U([0,1])$. Further, the interaction function $\lambda(x_i,x_j)$ introduced in \eqref{eq:micro}, representing the possible learning processes of low skilled agents through the interaction with the more competent agents, is supposed to be
\be\label{eq:lambda_test}
\lambda(x_i,x_j) =
\begin{cases}
\bar\lambda & x_i<x_j, \\
0 & x_i\ge x_j.
\end{cases}
\ee
The numerics have been performed in the case $\bar\lambda=\lambda_B=10^{-2}$.
The ODE system \eqref{eq:micro} has been solved through the RK4 method by considering both for competence and decision the time step $\Delta t =10^{-2}$ and the final time $Tf = 10$. The interaction terms of the evolving decision have been chosen of the form  \eqref{eq:P} with $Q(w_i,w_j) = 1$ and $R(x_i,x_j)$ describing the cMC and EB cases for increasing values of the parameter $c>0$.\\

We consider a multi-agent system characterized at $t=0$ by decisions strongly clustered: the most competent agent with uniform distribution in $w\in[-1,-0.75]$, $x\in[0.75,1]$ and the less skilled agents with uniform distribution $w\in[0.75,1]$, $x\in[0,0.25]$. In all the tests the two populations of competent/incompetent agents are supposed to be of equal size. In Figure \ref{fig:EB_MC} and in Figure \ref{fig:traj} we compare the evolution of the system in the cMC case (blue line) and in the EB case (orange dashed line). The results are presented for $c=1,5,10$. We can observe how the collective decision of the system strongly diverges in the case of EB with respect to the optimal decision, given by cMC model with $c>>1$. A further evidence of the emerging suboptimality is given in Figure \ref{fig:distance} where we depict the asymptotic collective decision of the multi-agent system evolving in the cMC and EB cases and an increasing $c=1,\dots,10$.

\begin{figure}
\centering
\subfigure[c=1]{\includegraphics[scale=0.197]{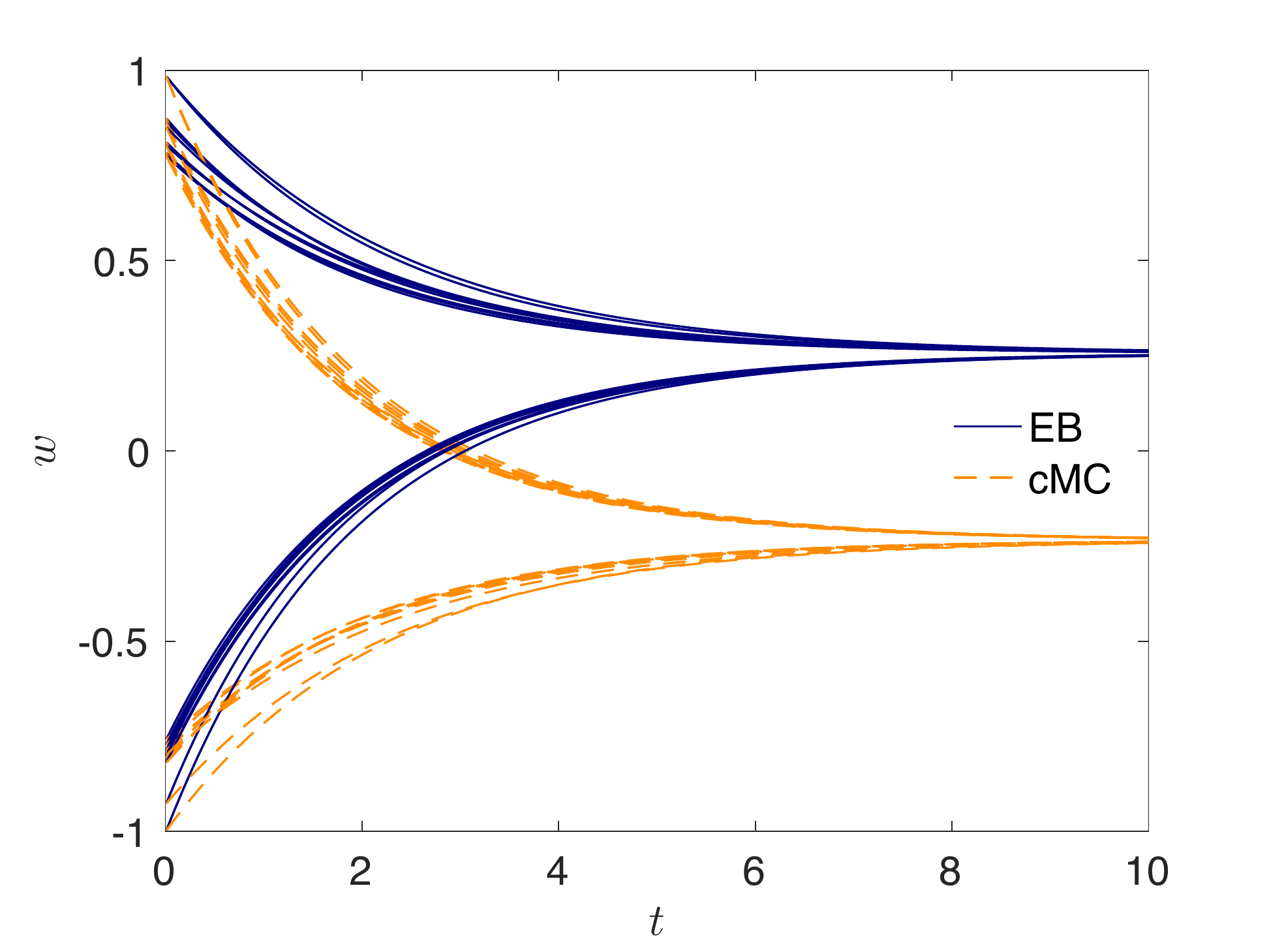}}
\subfigure[c=5]{\includegraphics[scale=0.197]{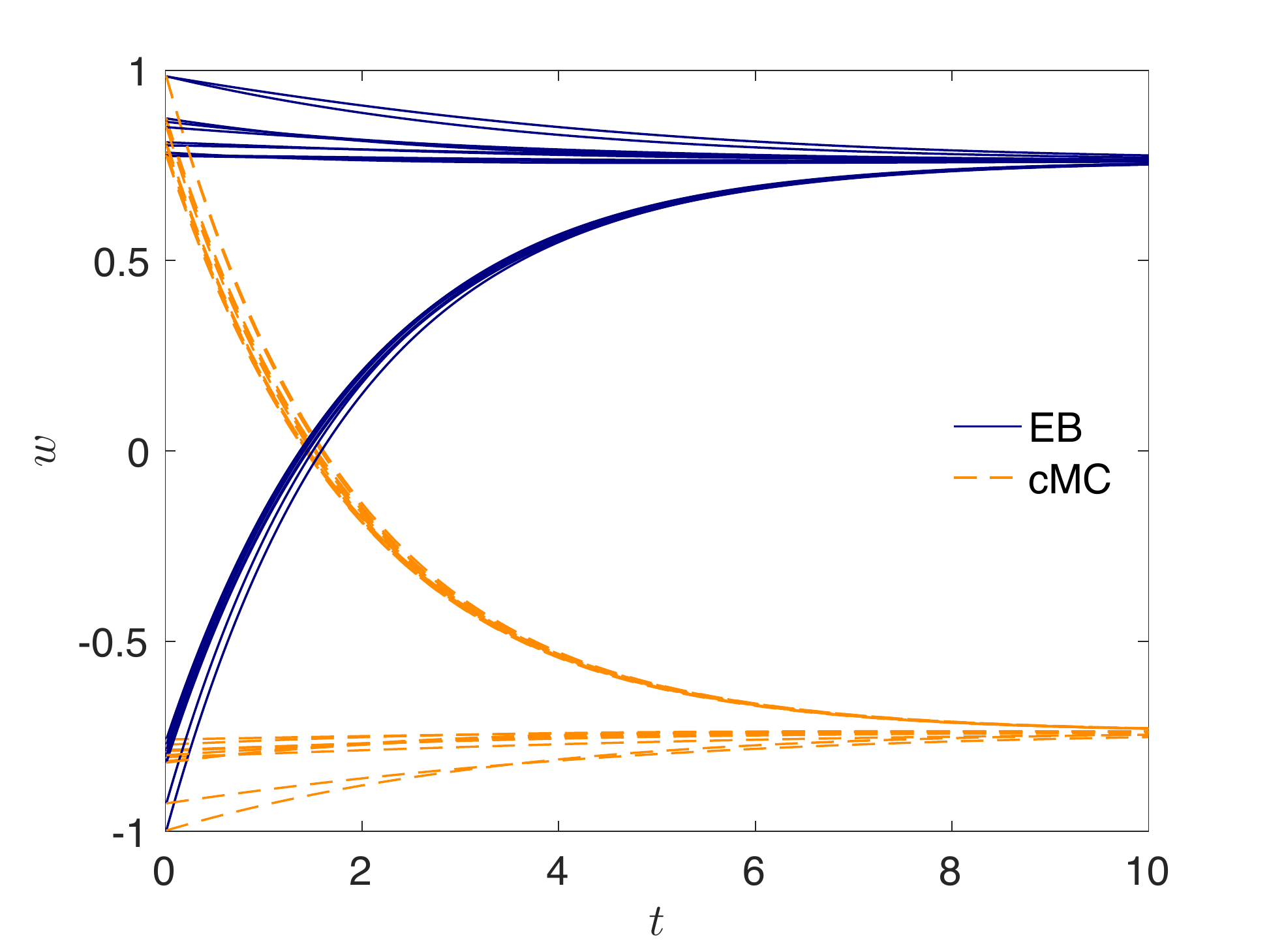}}
\subfigure[c=10]{\includegraphics[scale=0.197]{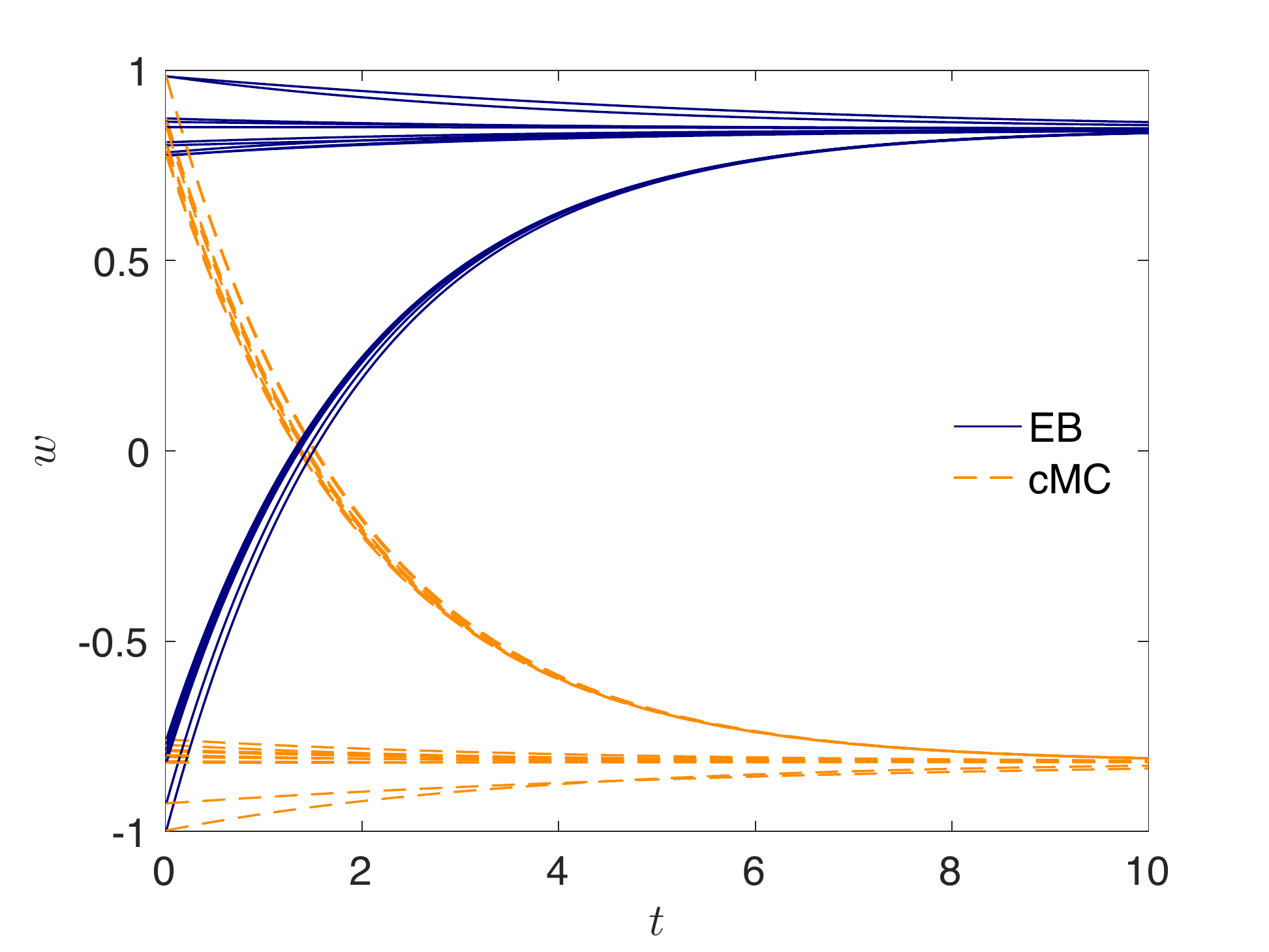}}
\caption{Evolution of the variable $w\in[-1,1]$, representing the decision of each agents, in the cases cMC and EB with increasing $c>0$, $N=20$.  The blue traits represent the evolution of the system in case of cMC interactions, whereas the orange dashed trait describe the evolution of the system under EB. We considered $\Delta t=10^{-2}$, $Tf=10$, $\bar{\lambda}=\lambda_B=10^{-2}$, solving the ODE system through RK4. }\label{fig:EB_MC}
\end{figure}

\begin{figure}
\centering
\subfigure[c=1]{\includegraphics[scale=0.197]{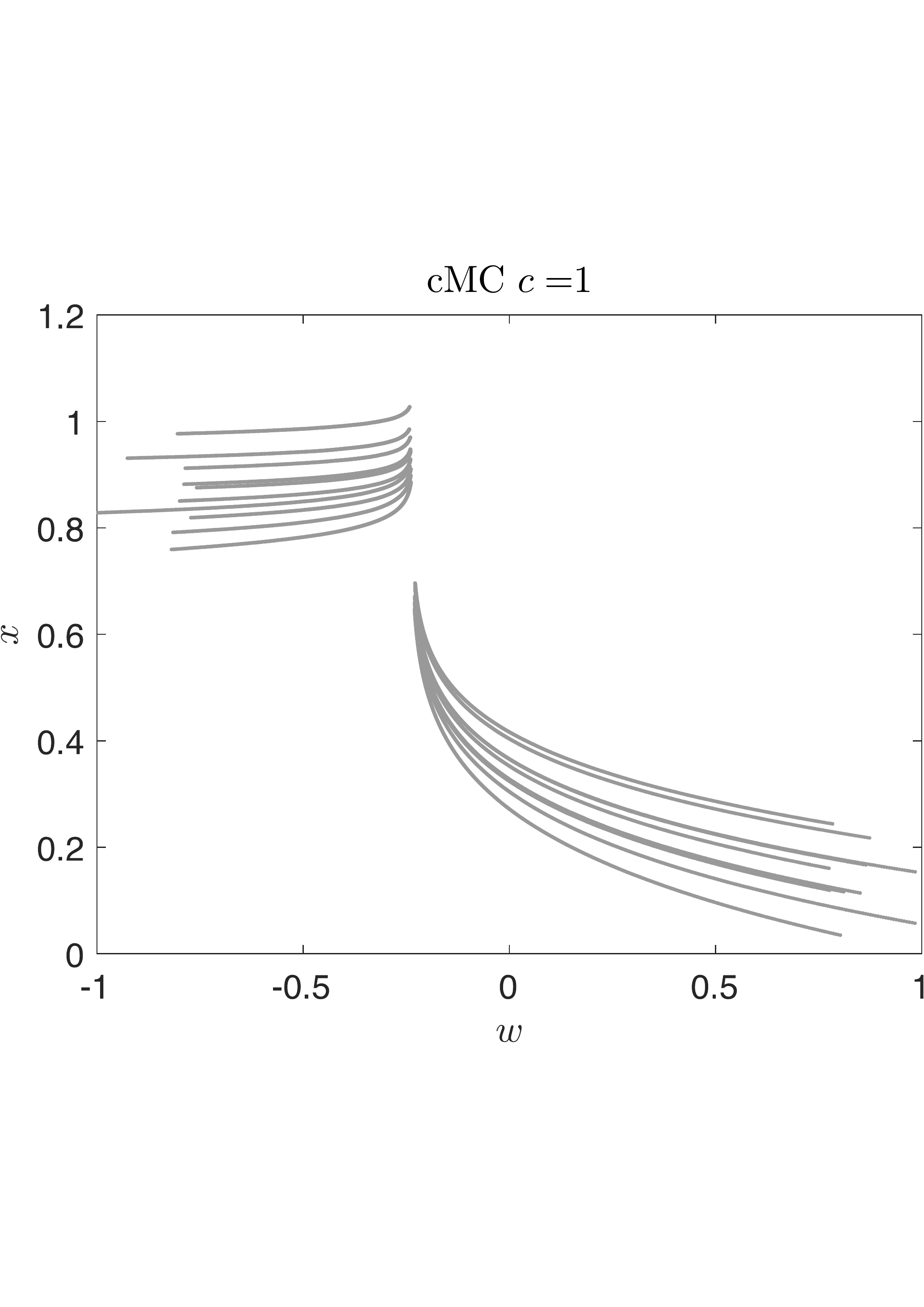}}
\subfigure[c=5]{\includegraphics[scale=0.197]{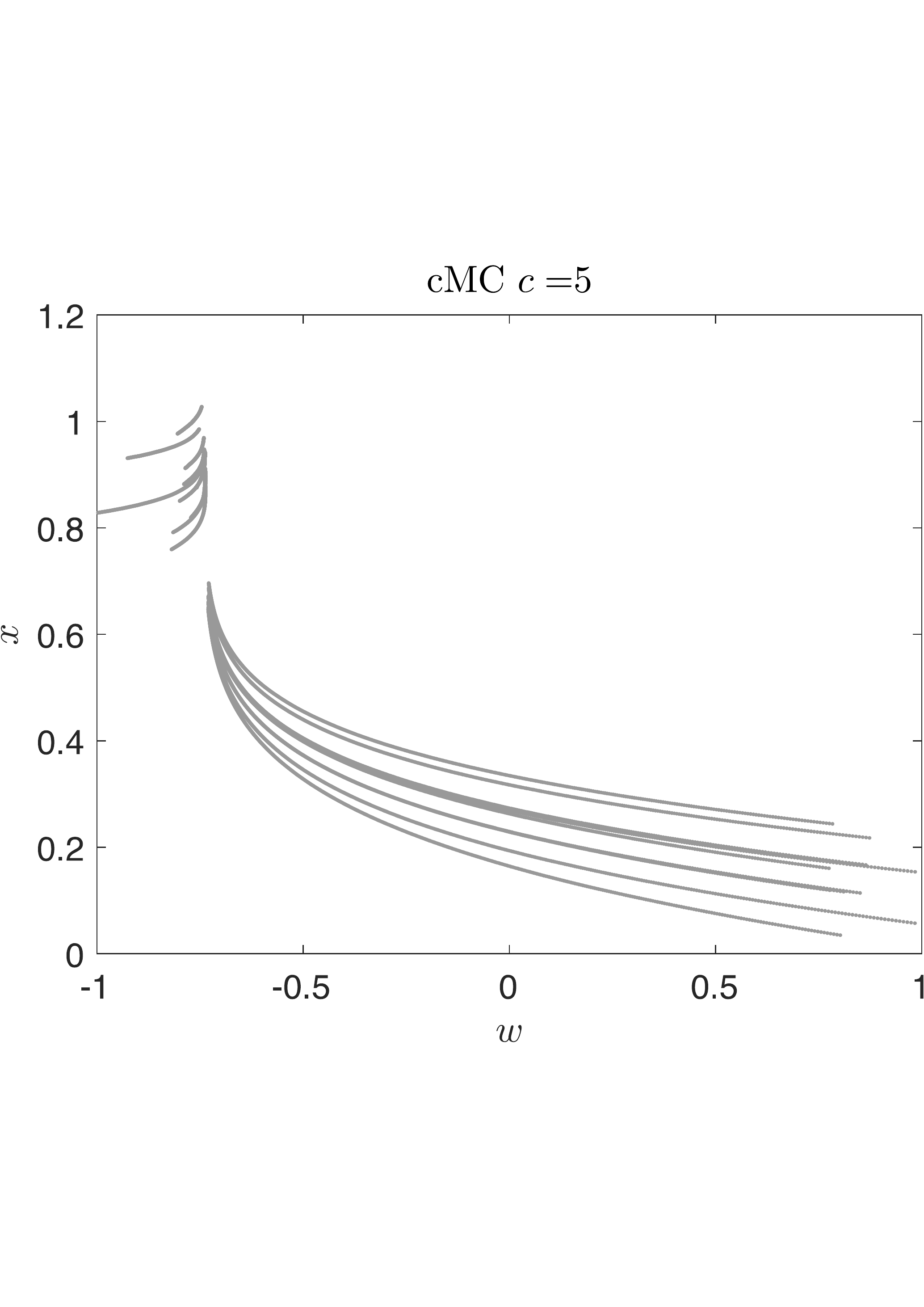}}
\subfigure[c=10]{\includegraphics[scale=0.197]{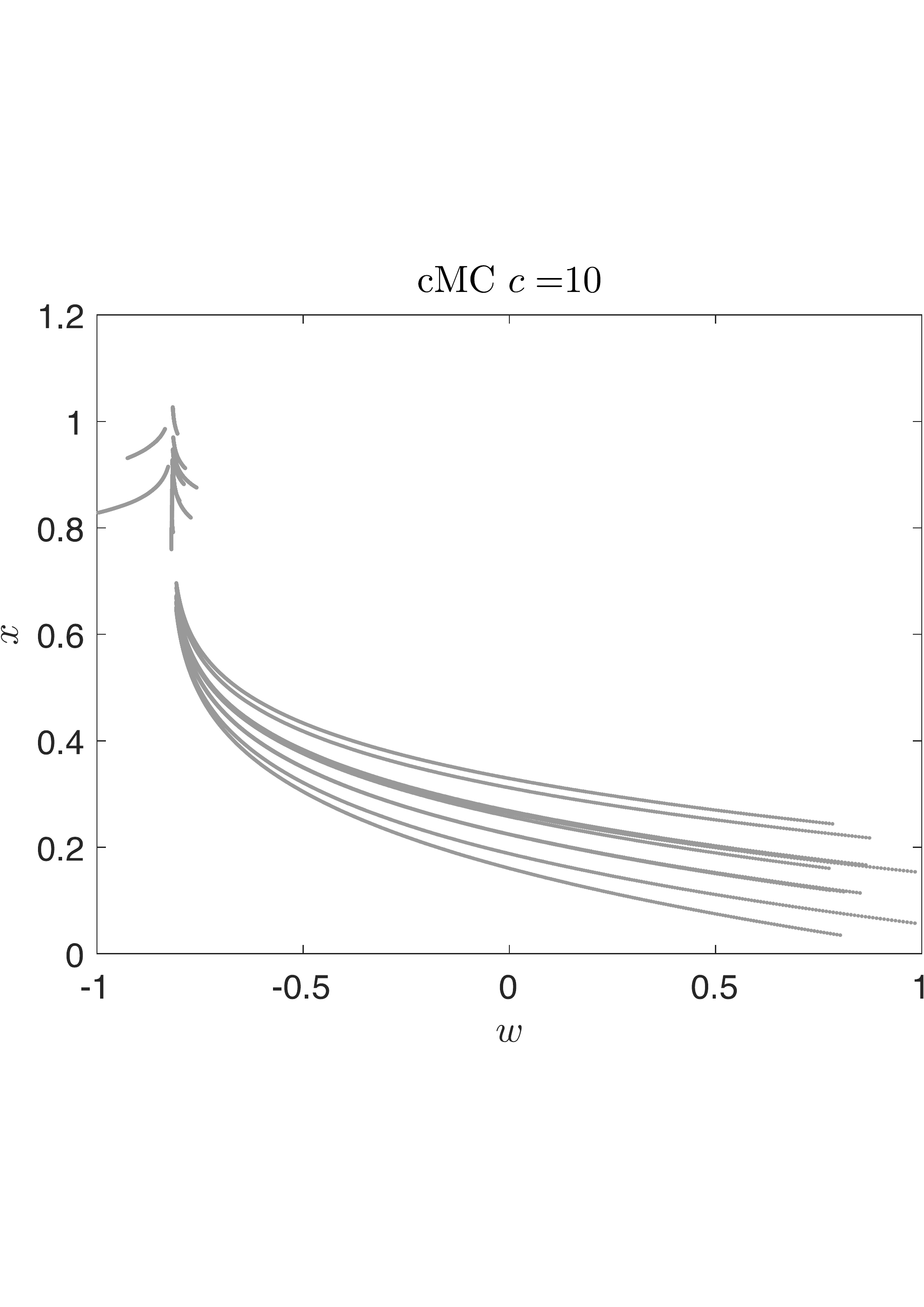}}\\
\subfigure[c=1]{\includegraphics[scale=0.197]{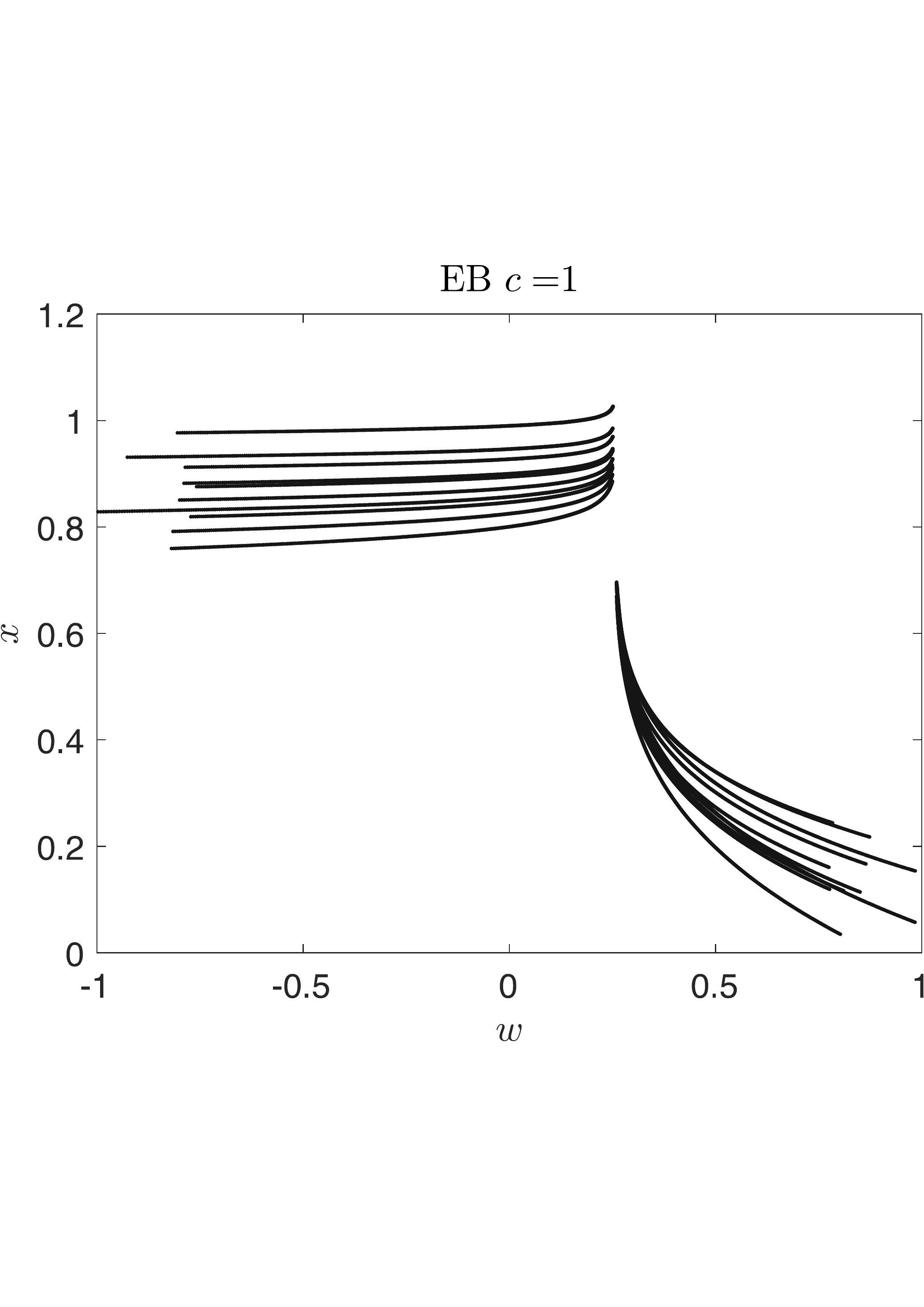}}
\subfigure[c=5]{\includegraphics[scale=0.197]{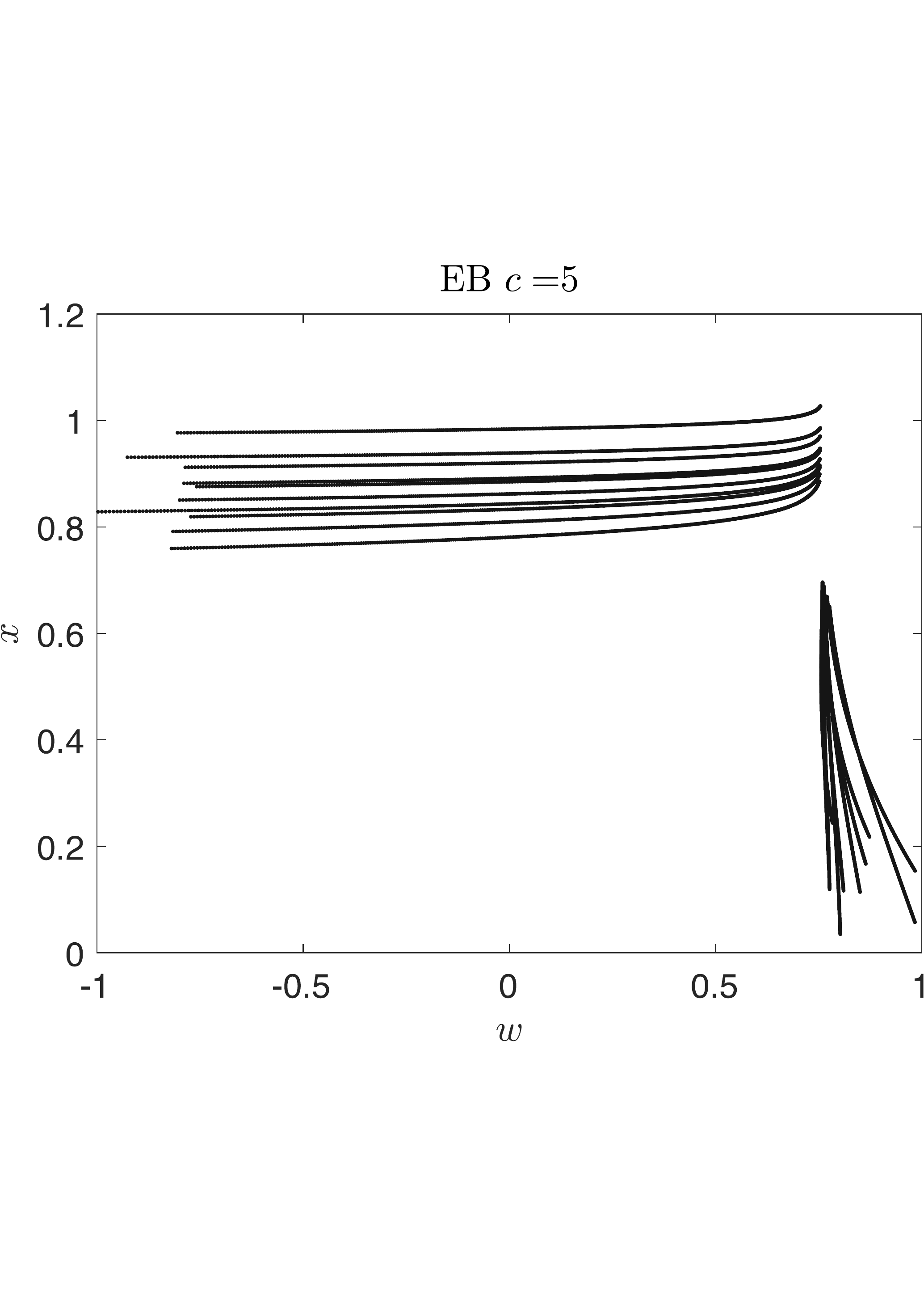}}
\subfigure[c=10]{\includegraphics[scale=0.197]{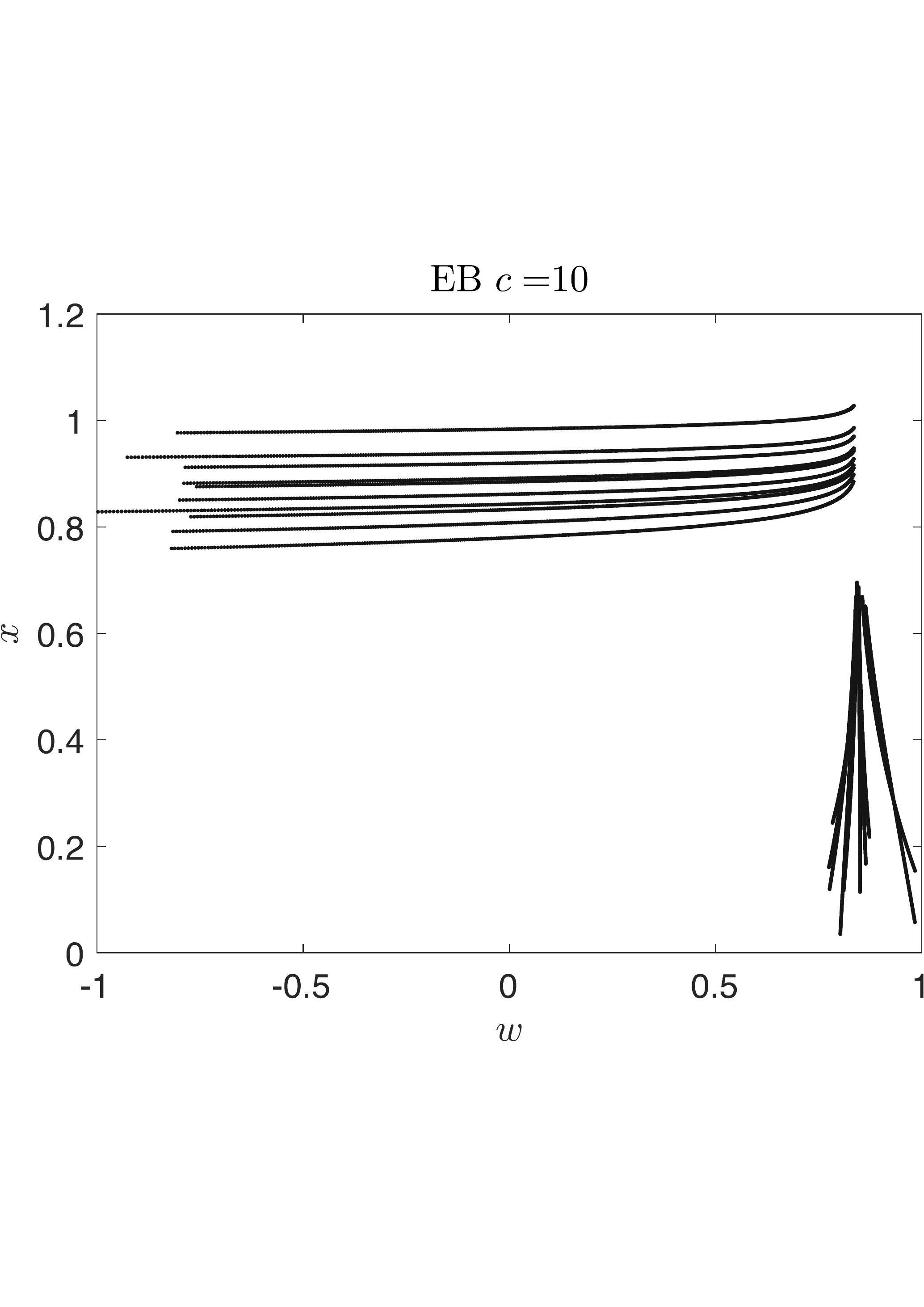}}
\caption{We illustrate in these figures the competence-decision dynamics of $N=20$ particles as we modeled in \eqref{eq:micro} considered in the whole time interval $[0,10]$, $\Delta t=10^{-2}$. The initial disposition of the competence and decision variables are uniform in $(x,w)\in [0.75,1]\times [-1,-0.75]$ and $(x,w)\in[0,0.25]\times [0.75,1]$. In the figure $(a)-(b)-(c)$ we represent the complete trajectories in the case of cMC with increasing $c>0$. In $(d)-(e)-(f)$ we present the dynamics in the EB case for several values of $c>0$. The evolution of the competence follows $\lambda(x_i,x_j)$ in \eqref{eq:lambda_test} with $\bar{\lambda}=\lambda_B=10^{-2}$. }\label{fig:traj}
\end{figure}

\begin{figure}
\centering
\includegraphics[scale=0.4]{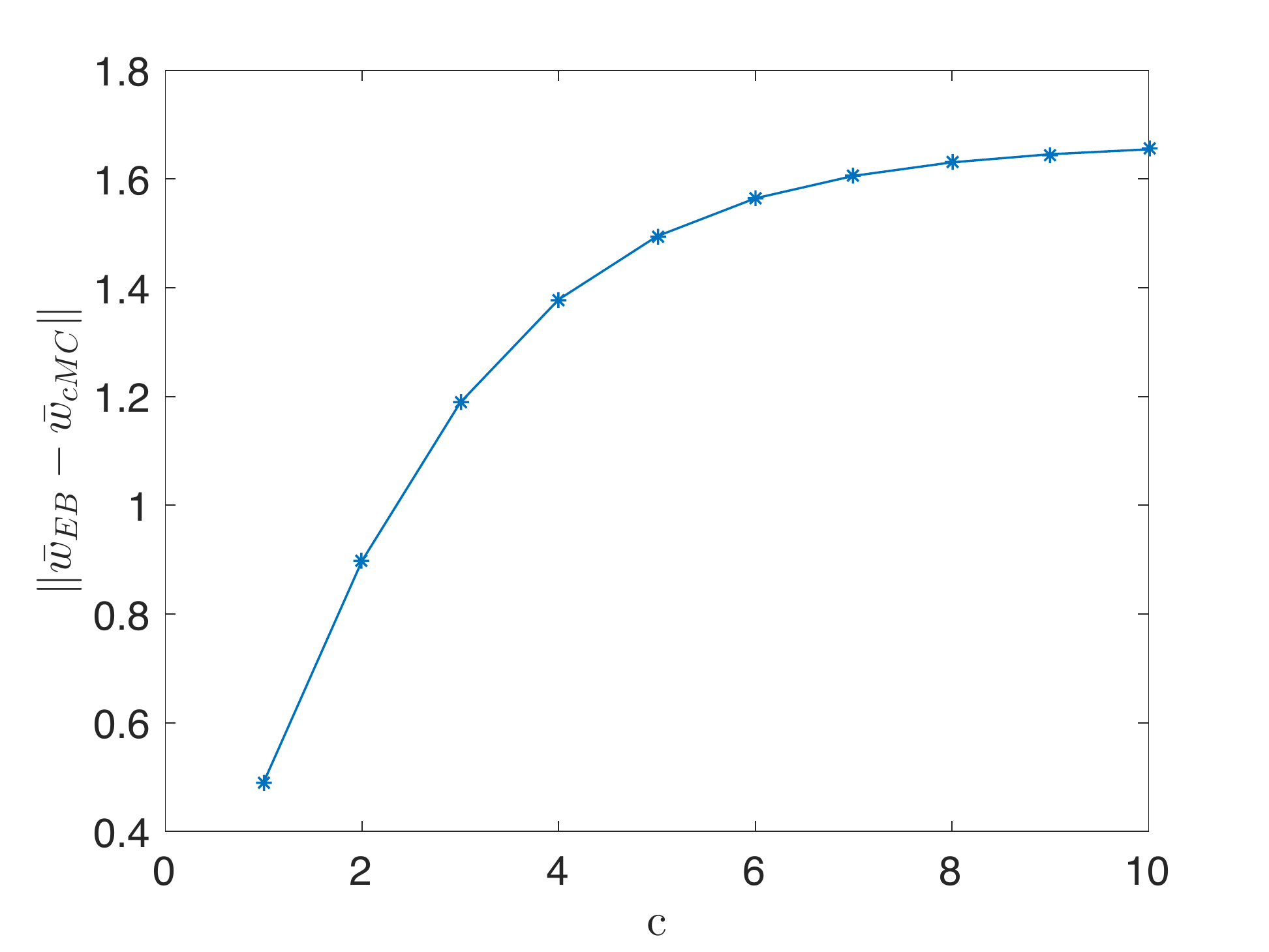}
\caption{Distance in $\ell^1$ norm of the asymptotic collective decisions in the cMC and EB case, i.e. $\bar w_{EB}$ and $\bar w_{cMC}$, with respect to $c=1,\dots,10$. }\label{fig:distance}
\end{figure}

\section*{Conclusions}
We introduced microscopic models of interacting multi-agent systems for the description of decision-making processes inspired by the experimental results in \cite{BOLRRF,MPNAS}. The introduction of the bias is here analytically studied at the level of the Laplacian matrix. We explicitly calculated the structure of the eigenvalues of the Laplacian matrix under under simplified assumptions both in the maximum competence and in the equality bias case. The suboptimality of the collective decision under equality bias with respect to the maximum competence case is then established for each time step. Numerical results show that the equality bias impairs the emergence of the decision of the most competent agents of the system.

\section*{Acknowledgements}
PV would like to acknowledge Dr. Luigi Teodonio for his stimulating discussions about the topic of the paper. MZ acknowledges the ''Compagnia di San Paolo''.


\begin{thebibliography}{9}
\bibitem{AP}
G. Albi, L. Pareschi. Modeling of self-organized systems interacting with a few individuals: from microscopic to macroscopic dynamics. \emph{Applied Mathematics Letters} 26(4): 397--401, 2013.

\bibitem{AHP} G.~Albi, M.~Herty, L.~Pareschi. Kinetic description of optimal control problems and applications to opinion consensus. \emph{Communications in Mathematical Sciences}, 13(6): 1407--1429, 2015.

\bibitem{APTZ}
G. Albi, L. Pareschi, G. Toscani, M. Zanella. Recent advances in opinion modeling: control and social influence. In \emph{Active Particles Volume 1, Theory, Methods, and Applications}, N. Bellomo, P. Degond, and E. Tadmor Eds., Birkh{\"a}user-Springer, 2017.

\bibitem{APZa} G.~Albi, L.~Pareschi, M.~Zanella. Boltzmann-type control of opinion consensus through leaders. \emph{Philosophical Transactions of the Royal Society of London A: Mathematical, Physical and Engineering Sciences}, 372(2028): 20140138, 2014.

\bibitem{APZb}
G. Albi, L. Pareschi, M. Zanella. Uncertainty quantification in control problems for flocking models. \emph{Mathematical Problems in Engineering}, Vol. 2015, 14 pp., 2015.

\bibitem{APZc}
 G.~Albi, L.~Pareschi, M.~Zanella. Opinion dynamics over complex networks: kinetic modeling and numerical methods. \emph{Kinetic and Related Models}, to appear.



\bibitem{BOLRRF}
B. Bahrami, K. Olsen, P. E. Latham, A. Roepstorff, G. Rees, C. D. Frith. Optimally interacting minds. \emph{Science}, 329(5995): 1081--1085, 2010.
%

\bibitem{BT}
C. Brugna, G. Toscani. Kinetic models of opinion formation in the presence of personal conviction. \emph{Physical Review E}, 92(5): 052818, 2015.

\bibitem{CCR}
J. A. Ca\~{n}izo, J. A. Carrillo, J. Rosado. A well-posedness theory in measures for some kinetic models of collective behavior. \emph{Mathematical Models and Methods in Applied Sciences} 21(3): 515--539, 2011.

\bibitem{CFRT}
J. A. Carrillo, M. Fornasier, J. Rosado, G. Toscani. Asymptotic flocking dynamics for the kinetic Cucker-Smale model. \emph{SIAM Journal on Mathematical Analysis}, 42(1): 218--236, 2010.

\bibitem{CFTV}
J. A. Carrillo, M. Fornasier, G. Toscani, F. Vecil. Particle, kinetic and hydrodynamic models of swarming. In \emph{Mathematical Modeling of Collective Behavior in Socio-Economic and Life Sciences}, G. Naldi, L. Pareschi, and G. Toscani Eds., Birkh{\'a}user Boston, pp. 297--336, 2010.

\bibitem{CFL}
C. Castellano, S. Fortunato, V. Loreto. Statistical physics of social dynamics. \emph{Reviews of Modern Physics} 81(2): 591, 2009.

\bibitem{CC}
A. Chakraborti, B. K. Chakrabarti, Statistical mechanics of money: how saving propensity affects its distribution, \emph{The European Physical Journal B-Condensed Matter and Complex Systems} 17(1): 167--170, 2000.

\bibitem{CPT}
E. Cristiani, B. Piccoli, A. Tosin. \emph{Multiscale Modeling of Pedestrian Dynamics}, MS\&A: Modeling, Simulation and Applications, vol.12, Springer International Publishing, 2014.

\bibitem{CS}
F. Cucker, S. Smale. Emergent behavior in flocks. \emph{IEEE Transactions on Automatic Control}, 52(5): 852--862, 2007.

\bibitem{DOCBC}
M. R. D'Orsogna, Y. L. Chuang, A. L. Bertozzi, L. S. Chayes. Self-propelled particles with soft-core interactions: patterns, stability and collapse. \emph{Physical Review Letters} 96(10):104--302, 2006.

\bibitem{DMPW}
B. D\"{u}ring, P. Markowich, J. F. Pietschmann, M.-T. Wolfram. Boltzmann and Fokker-Planck equations modelling opinion formation in the presence of strong leaders. \emph{Proceedings of the Royal Society of London A: Mathematical, Physical and Engineering Sciences}  465(2112): 3687--3708, 2009.

\bibitem{G97}
S. Galam. Rational group decision making: a random field Ising model at T=0. \emph{Physica A. Statistical Mechanics and its Applications}, 238(1--4): 66--80, 1997.

\bibitem{GZ}
S. Galam, J.-D. Zucker. From individual choice to group decision-making. \emph{Physica A. Statistical Mechanics and its Applications}, 287: 644--659, 2000.

\bibitem{G}
F. Galton. One vote, one value. \emph{Nature} 75: 414--414, 1907.

\bibitem{HF}
N. Harvey, I. Fischer. Taking advice: Accepting help, improving judgment, and sharing responsibility. \emph{Organizational Behavior and Human Decision Processes} 70(2): 117--133, 1997.

\bibitem{HZ}
M. Herty, M. Zanella. Performance bounds for the mean-field limit of constrained dynamics. \emph{Discrete and Continuous Dynamical Systems A}, 37(4): 2023--2043, 2017.

\bibitem{KD}
J. Kruger, D. Dunning. Unskilled and unaware of it: how difficulties in recognizing one's incompetence lead to inflated self-assessments. \emph{Journal of Personality and SOcial Psychology}, 77(6): 1121--1134, 1999.


\bibitem{MPNAS}
A. Mahmoodi, D. Bang, K. Olsen, Y. A. Zhao, Z. Shi, K. Broberg, S. Safavi, S. Han, M. N. Ahmadabadi, C. D. Frith, A. Roepstorff, G. Rees, B. Bahrami. Equality bias impairs collective decision-making across cultures. \emph{Proceedings of the National Academy of Sciences} 112(12): 3835--3840, 2015.


\bibitem{MT1}
S. Motsch, E. Tadmor. Heterophilious dynamics enhances consensus. \emph{SIAM Review} 56(4): 577--621, 2014.

\bibitem{MT2}
S. Motsch, E. Tadmor. A new model for self-organized dynamics and its flocking behavior. \emph{Journal of Statistical Physics} 144(5): 923--947, 2011.


\bibitem{PT1}
L. Pareschi, G. Toscani. Wealth distribution and collective knowledge: a Boltzmann approach. \emph{Philosophical Transactions of the Royal Society A: Mathematical, Physical and Engineering Sciences} 372(2028): 20130396, 2014.

\bibitem{PT2}
L. Pareschi, G. Toscani. \emph{Interacting Multiagent Systems. Kinetic Equations and Monte Carlo Methods}. Oxford University Press, 2013.

\bibitem{PVZ}
L. Pareschi, P. Vellucci, M. Zanella. Kinetic models of collective decision-making in the presence of equality bias. \emph{Physica A. Statistical Mechanics and its Applications}, 467: 201--217, 2017.


\bibitem{Shen}
J. Shen. Cucker--Smale flocking under hierarchical leadership. \emph{SIAM Journal on Applied Mathematics} 68(3): 694--719, 2007.

\bibitem{T}
G. Toscani. Kinetic models of opinion formation. \emph{Communications in Mathematical Sciences} 4(3): 481--496, 2006.
\end{thebibliography}
\end{document}